\newcommand{\mb}{\mathbf}
\newcommand{\mc}{\mathcal}
\newcommand{\bm}{\boldsymbol}
\newcommand{\bth}{\boldsymbol{\theta}}
\newcommand{\bphi}{\boldsymbol{\phi}}
\newcommand{\siga}{\sigma_{\mc A}}
\newcommand{\h}{\text{hom}}
\newcommand{\hq}{Q_{\h, \bphi}}
\newcommand{\qedsymbol}{$\blacksquare$}
\newenvironment{proof}
    {
      \emph{Proof.}
    }
    {
      \hfill\qedsymbol
    }
\newtheorem{Lemma}{Lemma}
\newtheorem{Prop}{Proposition}
\newtheorem{Definition}{Definition}
\newtheorem{Remark}{Remark}
\begin{document}
\title{Multi-Agent Reinforcement Learning for Cooperative Coded Caching via Homotopy Optimization}
\author{
Xiongwei~Wu, {\it Student Member, IEEE}, Jun Li, {\it Senior Member, IEEE}, Ming~Xiao, {\it Senior Member, IEEE}, P. C. Ching, {\it  Fellow, IEEE}, H. Vincent Poor, {\it Fellow, IEEE}
\thanks{This work was supported in part by the U.S. National Science Foundation under Grant CCF-1908308  and in part by a Princeton Schmidt Data-X Research Award.}
\thanks{X. Wu and P. C. Ching are with the Department of Electronic Engineering, The Chinese University of Hong Kong, Shatin, Hong Kong SAR of China  (e-mail: xwwu@ee.cuhk.edu.hk; pcching@ee.cuhk.edu.hk). J. Li is with the School of Electronic and Optical Engineering, Nanjing University of Science and Technology, Nanjing 210094, China. (e-mail:  jun.li@njust.edu.cn). M. Xiao is with the School of Electrical Engineering and Computer Science, Royal Institute of Technology (KTH), Stockholm, Sweden (email: mingx@kth.se). H. V. Poor is 
with the Department of Electrical Engineering, Princeton University, Princeton, NJ 08544. (e-mail: poor@princeton.edu).}
}
\maketitle
\vspace*{-4em}

\begin{abstract}
Introducing cooperative coded caching into small cell networks is a promising approach to reducing traffic loads. By encoding content via maximum distance separable (MDS) codes, coded fragments can be collectively cached at small-cell base stations (SBSs) to enhance caching efficiency. However, content popularity is usually time-varying and unknown in practice. As a result, cache contents are anticipated to be intelligently updated by taking into account limited caching storage and interactive impacts among SBSs. In response to these challenges, we propose a multi-agent deep reinforcement learning (DRL) framework to intelligently update cache contents in dynamic environments. With the goal of minimizing long-term expected fronthaul traffic loads, we first model dynamic coded caching as a cooperative multi-agent Markov decision process. Owing to MDS coding, the resulting decision-making falls into a class of constrained reinforcement learning problems with continuous decision variables. To deal with this difficulty, we custom-build a novel DRL algorithm by embedding homotopy optimization into a deep deterministic policy gradient formalism. Next, to empower the caching framework with an effective trade-off between complexity and performance, we propose centralized, partially and fully decentralized caching controls by applying the derived DRL approach. Simulation results demonstrate the superior performance of the proposed multi-agent framework. 
\end{abstract}
\begin{IEEEkeywords}
Small cell networks, MDS codes, homotopy optimization, deep multi-agent reinforcement learning 
\end{IEEEkeywords}

\section{Introduction}
In recent years, there has been a significant surge in mobile data traffic,
which is anticipated to impose a heavy traffic burden on wireless networks for the foreseeable future \cite{index2016cisco,bastug2014living}. As a consequence, wireless networks could  become very congested, and thus find it difficult to satisfy user requests with satisfactory quality of service. To cope with this challenge, edge caching has been proposed as a promising solution towards fifth generation (5G) communications and beyond \cite{bastug2014living}. By endowing caching units in wireless edge nodes, e.g., small-cell base stations (SBSs), popular content can be pre-fetched close to users. Subsequently, the caching content is able to be delivered to users without duplicated transmissions in fronthaul and backhaul links \cite{bastug2014living}. This process significantly decreases traffic loads, alleviates network congestion, reduces delay, and thus improves system performance \cite{liu2016caching,li2015distributed,li2016pricing}.

In general, caching policies should be designed according to system features, e.g., users arrivals and content popularity, to better satisfy user demands. These features, in practice, usually exhibit unknown and temporal dynamics. For instance, content popularity is generally time-varying because the most popular content at the current epoch may not receive the highest attention in the future; and mobile users could change locations as time passes \cite{bettstetter2001mobility}. Thus, with a limited caching storage, it is crucial to learn how to reasonably update cache contents given the observations of system features. Fortunately, by embedding deep learning into reinforcement learning (RL), deep RL (DRL) has emerged as an effective tool to address decision-making in dynamic environments \cite{mnih2015human}. 
This artificial intelligence technique can be leveraged to learn an optimal policy to maximize long-term performance criteria through interactions with environments \cite{mnih2015human, luong2019applications}. 
In this way, utilizing DRL is envisioned to empower ``intelligent'' caching, i.e., updating caching resources by tracking and adapting to dynamic features of wireless networks \cite{sadeghi2019deep}. 

 
\subsection{Related Work}
Prior studies generally investigated the potentials of edge caching by optimizing average performance criteria. For instance, the studies in \cite{blaszczyszyn2015optimal,li2015delay,li2016weighted} examined effective caching strategies to alleviate traffic loads, and to reduce system cost and download latency. With the aid of caching resources at SBSs, the studies in \cite{tao2016content,wu2019latency} investigated the joint design of SBS beamforming and clustering. Caching strategies in these studies were designed to store either entire content items or uncoded fragments, which are referred to as {\it uncoded caching}. 

To further improve caching efficiency, {\it coded caching} has recently gained considerable research attention. The study in \cite{maddah2014fundamental} proposed a novel coded caching scheme, which provides a global caching gain relating to cumulative storage over all caching units. The research in \cite{bioglio2015optimizing,liao2017coding} investigated cooperative coded caching by utilizing maximum separable distance (MDS) codes to reduce traffic loads. 
MDS coded caching was also examined in \cite{wu2020jointLongTermCacheUpdating,wu2019jointMDS} to augment SBS collaboration and thus offers great advantages to lower latency and reduce power consumption compared with {\it uncoded caching}. The above-mentioned studies mainly investigated offline caching policies by assuming time-invariant content popularity distributions. 

To exploit dynamic features in wireless networks, extant works have been devoted to designing caching policy by using RL. The study in \cite{sadeghi2018optimal} utilized Q-learning to find an optimal caching policy to minimize network cost. To counter the curse of dimensionality in conventional RL, DRL-based caching polices were advocated in \cite{he2017deep,sadeghi2019deep,wei2018joint,wu2019dynamic} by using deep neural networks (DNNs) as function approximators. Moreover, the study in \cite{zhong2020deep}  proposed a multi-agent DRL framework to maximize cache hit ratios for centralized and decentralized settings. The authors in \cite{xu2020collaborative} investigated cache placement by using cooperative multi-agent multi-armed bandit learning at small cell networks (SCN). A decentralized caching scheme was proposed in \cite{Wang2020federated} by utilizing federated deep reinforcement learning. 
Nevertheless, the research in \cite{sadeghi2018optimal,wu2019dynamic,he2017deep, sadeghi2019deep, wei2018joint, zhong2020deep, xu2020collaborative,Wang2020federated} focused on {\it uncoded caching}. That is, each content item is entirely cached without exploring SBSs cooperatively fetching coded fragments of each content item. 

\subsection{Contributions}
Indeed, cooperatively pre-fetching MDS coded fragments has been proven to significantly alleviate traffic loads and thus reduce latency, as well as transmission cost over storing uncoded fragments at SCN \cite{liao2017coding,wu2019jointMDS,wu2020jointLongTermCacheUpdating}. 
Specifically, since one can distribute MDS coded fragments of a content item to multiple SBSs, mobile users can access a cluster of SBSs simultaneously to download coded fragments of the desired content item. To date, very few works have investigated how to ``intelligently'' update MDS coded content items under dynamic environments (e.g., time-varying content popularity). Under centralized control,  
a prior study \cite{zhang2019accelerated} utilized deep deterministic policy gradient (DDPG) to explore MDS coded caching under dynamic content popularity with the aid of predicting user requests; and study \cite{gao2020reinforcement} utilized Q-learning with function approximation to investigate coded caching.

It is worth noting that, unlike uncoded caching schemes \cite{luong2019applications, mnih2015human, sadeghi2018optimal,he2017deep, sadeghi2019deep, wei2018joint, wu2019dynamic, zhong2020deep, xu2020collaborative} addressing binary decision-making, coded caching essentially entails continuous caching decisions that are subject to storage constraints. Accordingly, the resulting decision-making for dynamic coded caching is a constrained RL with a continuous action space.
Quantizing actions into discrete values or directly applying conventional DRL algorithms may not be able to efficiently handle this constrained RL problem. In addition, a centralized control could lead to excessive communication overhead because the cloud processor (CP) needs frequent communications with SBSs to aggregate information and inform SBSs of their caching decisions. As the number of SBSs increases, the dimension of continuous states and actions in a centralized control would be very large, and optimal caching is computationally prohibitive. As a consequence, it is very challenging, but essential, to design efficient DRL algorithms for cooperative coded caching. 
 
To bridge the research gap identified above, in this work, we investigate cooperative coded caching design at SCN with temporally evolving content popularity. 
In particular, we address the following fundamental issues for empowering ``intelligent'' caching: 
{\bf i)  how to design efficient DRL algorithms for a constrained RL problem with continuous decision variables; and ii) how to develop a multi-agent DRL-based framework with different levels of controls to obtain an equitable trade-off between performance and complexity.}

The main contributions of this work are summarized as follows:
\begin{enumerate}[$\bullet$]
  \item To the best of our knowledge, this is the first work to investigate a multi-agent DRL framework for MDS coded caching under time-varying content popularity.
  Specifically, we model cache updating for MDS coded caching as a cooperative multi-agent Markov dynamic process (MDP). With the goal of minimizing long-term expected cumulative fronthaul traffic loads, we judiciously define the system state, local observations and action space of each agent, as well as caching reward. Our formulated problem is a continuous RL with action constraints. We also characterize optimal decisions in a closed form.

  \item As a core technical contribution, we reformulate a general constrained RL problem, whose action space is inefficient to be satisfied through designing DNNs, into a tractable form that can be dealt with by utilizing homotopy optimization. 
  Then, we custom-build a novel DRL, i.e., homotopy deep deterministic policy gradient (HDDPG), through recasting the basic elements of RL and unfolding the iterative process of homotopy optimization. The novelty of this approach lies in introducing a reasonable cumulative penalty to the objective of RL, and then properly manipulating it by using homotopy optimization. 

  \item To endow the proposed DRL caching framework with different levels of control, we generalize the proposed HDDPG from centralized control to partially and fully decentralized controls. Specifically, in the centralized control, the CP coordinates SBSs to conduct cache updating by using global information. To reduce complexity and communication overhead, we then propose a partially decentralized control by allowing SBSs to make decisions locally, but their polices are learned in a centralized manner. In the fully decentralized control, each SBS works as an independent learner and trains its caching policy independently based on local observations. The proposed decentralized controls could obtain a desirable trade-off between complexity and performance, and thus have the potential to handle large-scale wireless networks.
\end{enumerate}

The remainder of this paper is organized as follows. 
Sec. II presents the problem statement. Sec. III introduces the proposed DRL. Sec. IV develops a centralized cooperative coded caching design. Sec. V proposes a partially decentralized caching design, and Sec. VI proposes a fully decentralized caching design. Sec. VII presents performance evaluations, and Sec. VIII concludes the paper. 

\section{Problem Statement}
\subsection{MDS Coded Caching at SCN}
As illustrated in Fig. \ref{system}, we consider a SCN, in which a total of $B$ SBSs are densely deployed and thus are capable of cooperatively providing communication services for users. Each SBS is endowed with a cache unit, which can cache popular content from the CP through a capacity-limited fronthaul. The CP is further connected to the core network through a backhaul. Suppose that a catalog of $F$ content items are available in CP. For ease of discussion, all of the content items are of the same size $s$ bits, and each cache unit has a storage of $L\times s$ bits.  Let $\mc B = \{1, \cdots, B\}$ and $\mc F = \{1, \cdots, F\}$ denote the indices of SBSs and content items, respectively.

To reduce traffic loads on the capacity-limited fronthaul and provide better services for mobile users, SBSs can  proactively cache popular content. By applying MDS codes, each content item of size $s$ bits is able to be encoded into a sufficiently long sequence of parity bits, and any $s$ parity bits are sufficient to reconstruct the original content item \cite{wu2020jointLongTermCacheUpdating,liu2013mixed,liao2017coding}. Moreover, in practice, MDS coding can be implemented by Raptor codes only with a very small redundancy \cite{liu2013mixed,shokrollahi2006raptor}. 
Therefore, SBSs with limited caching storage can cooperatively and collectively cache these coded parity bits, so as to satisfy user requests locally as much as possible. More precisely, we define the cache allocation matrix as $\bm L = [l_{f,b}] \in \mathbb{R}^{F\times B}$, where element $l_{f,b} \in [0,1]$ denotes the proportion of parity bits encoding content item $f$ that are stored at SBS $b$, $\forall b \in \mc B, f\in \mc F$. 
Owing to the storage limit at SBSs, cache allocation needs to satisfy 
$\sum_{f \in \mc F} l_{f,b} \leq L, \forall b$. 
It is worth noting that, the parity bits of a content item available at SBS $b$ should be independent of that cached at other SBSs; thus, users can always download distinct coded content items from multiple SBSs \cite{liao2017coding}. 
To guarantee this caching diversity among SBSs, the encoded information sequence of parity bits of every content item should be sufficiently long, e.g., larger than $Bs$. 
\begin{figure}[h]
  \centering
  \includegraphics[scale=0.6]{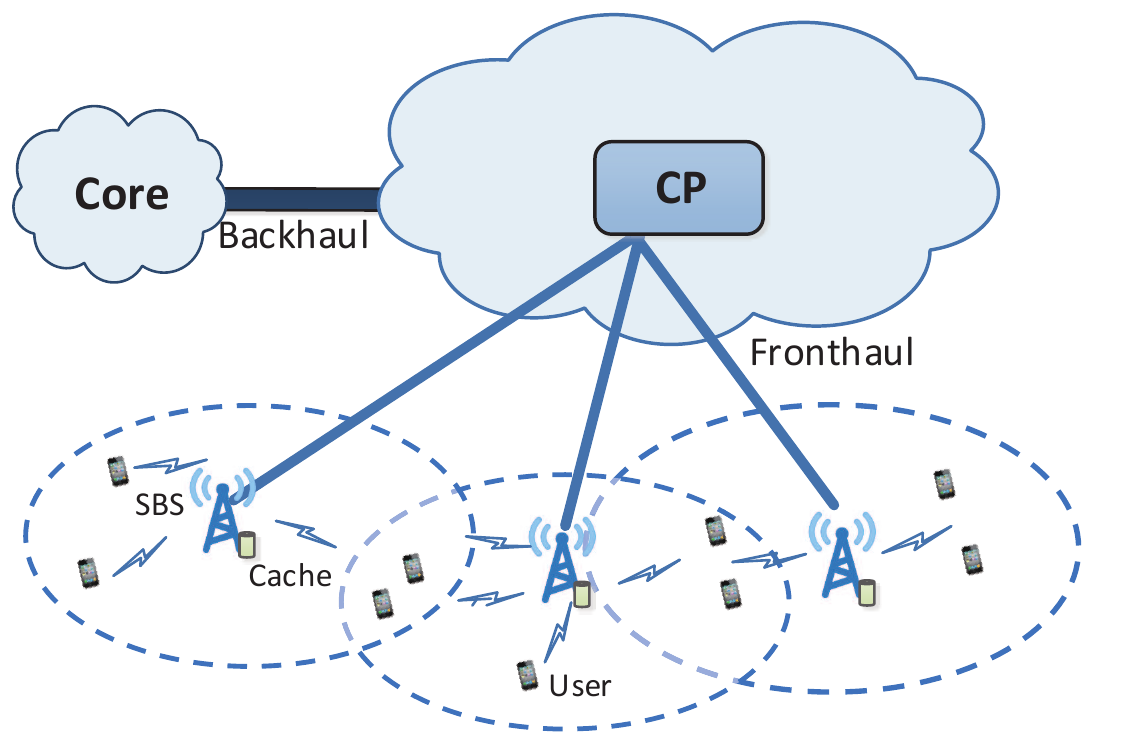}
  \caption{A downlink cache-enabled SCN. }
  \label{system}
\end{figure}

In what follows, we introduce how SBSs cooperatively transmit coded content items to mobile users. Specifically, the operation cycle of a SCN is slotted into a series of epochs, indexed by $t = 0, 1, \cdots$. 
For each epoch $t$, a number of active users $\mc K^t$ are randomly distributed in the horizontal plane. We also assume that the duration of each epoch is relatively short, such that active users are considered to be quasi-static during a single epoch. Each user $k \in \mc K^t$ is able to be served by a local SBS cluster, which is specified by the communication radius \cite{bharath2016learning}. 
Therefore, network connectivity for epoch $t$ is denoted by $\mc E^t = \{e_{k,b}^t\}_{k \in \mc K^t, b \in \mc B}$; each element $e_{f,b}^t = 1$ if $d_{k,b}^t \leq r_0$, where $r_0$ denotes the communication radius of each SBS, and $d_{k,b}^t$ denotes the distance between user $k$ and SBS $b$; otherwise, $e_{f,b}^t = 0$. 
For notational convenience, we collect all active users served by SBS $b$ at epoch $t$ as set $\mc K_b^t = \{ k\in \mc K^t| e_{k,b}^t = 1\}$, for $\forall b \in \mc B$. 
Each user $k \in \mc K^t$ is assumed to request one content item at a single epoch. 
Evidently, when content item $f$ is not fully stored at user $k$ neighboring SBSs, i.e., 
\begin{align}
      \sum_{b\in \mc B} l_{f,b} e_{k,b}^t < 1,
\end{align}
the missing part ($1 - \sum_{b\in \mc B} l_{f,b} e_{f,b}^t) s $ needs to be transmitted by the CP via fronthaul. This event is referred to as \emph{cache miss}, which introduces additional fronthaul traffic loads. We summarize all of the key notations\footnote{Without further definition, in this paper, notation $(\cdot)^t$ denotes the value of $(\cdot)$ taken at epoch $t$.} in Table \ref{table:parameters}.
\begin{table}[h!]
\centering
\caption{Summary of main notations}
\begin{tabular}{c c}
  \hline
      $b, \mc B$  & SBS index, indices of all SBSs  \\
      $k, \mc K_b^t, \mc K^t$ & User index, active users observed by SBS $b$, and all of the active users at epoch $t$\\ 
      $f, f_k^t, \mc F$ & Content item index, index of content item request by user $k$ at epoch $t$, and indices of all content items\\
      $l_{f,b}, \bm L$ & Fraction of content item $f$ stored at SBS $b$, and cache allocation matrix\\
      $L$ & Normalized caching storage at each SBS\\
      $e_{k,b}^t, \mc E^t$ & Whether or not user $k$ can access SBS $b$, and network connectivity at epoch $t$\\
  \hline
\end{tabular}
\label{table:parameters}
\end{table}
As a result, to mitigate traffic burden on fronthaul, optimized caching policies depend highly on knowledge of network connectivity, i.e., $\mc E^t$, and content popularity, i.e., $\{p_{f}, \forall f\}$, where $p_f$ denotes the probability of content item $f$ being requested. These elements generally exhibit unknown and time-varying dynamics in practice. 
Indeed, caching content needs to be temporally updated based on historical observations in order to provide better download services for future requests. We thereby introduce a dynamic cooperative coded caching problem in the following subsection.

\subsection{Cooperative Multi-Agent MDP}
In the cooperative coded caching, SBSs are anticipated to collaboratively cache the coded content items, which can be specified by optimizing continuous variables $\{l_{f,b}^t\}$.  
Therefore, we formulate the considered cooperative coded caching problem as a cooperative multi-agent MDP. 

\begin{Definition}
	A cooperative multi-agent MDP is specified by a tuple $(\mc B, \mc S, \mc A, \mc P, R, \gamma)$, where $\mc B$ denotes the set of agents; $\mc S$ denotes the state space, which aggregates all of the agent local observations $\bm S = \cup_{b \in \mc B} \{\bm S_b\}$; and $\mc A$ denotes the action space for a joint action $\bm A = \cup_{b \in \mc B} \{\bm A_b\}$. Let $\mc A_b$ be the action space of $\bm A_b$, and then $\mc A = \cup_{b \in \mc B} \{ \mc A_b\}$. $\mc P$ collects all of the transition probability Pr$\{\bm S'| \bm S, \bm A\}$ for $\forall \bm S, \bm S' \in \mc S, \bm A \in \mc A$. All of the agents share a common reward $R$ after they cooperatively take actions $\{\bm A_b\}_{b\in \mc B}$. $\gamma \in [0,1)$ denotes a discount factor.
\end{Definition}
As aforementioned, user requests are expected to be satisfied by SBSs locally as much as possible; otherwise, the missing fragments could introduce additional traffic burden and transmission delay on the fronthaul. Therefore, in this paper, our goal is to minimize the expected fronthaul traffic loads. Accordingly, the basic elements in a cooperative multi-agent MDP are defined as follows.  

\subsubsection*{State} We assume that a user request can be observed by his or her neighboring SBSs only.   
Consequently, SBS $b$ has local observation of the environment, which is defined as follows:
\begin{align}
  \bm S_b^t = \left[\{f_k^t\}_{k \in \mc K^t_b}, \{\mc E^t_k\}_{k \in \mc K_b^t}, 
  \{l_{f,b}^t\}_{f\in \mc F} \right], 
\end{align}
where $f_k^t \in \mc F$ denotes the index of the content item requested by user $k$ at epoch $t$; and  
$\mc E_k^t = \{e^t_{k,b'} \}_{ b' \in \mc B}$ implies the strategy of SBS collaboration in order to satisfy user $k$'s request, 
which can be acquired by knowing user location.
By aggregating observations of all SBSs, the system state is defined as: 
\begin{align}
  \bm S^t = \left[ \{f_k^t\}_{k \in \mc K^t}, \mc E^t, \bm L^t\right].
\end{align}

\subsubsection*{Action} By the end of each epoch $t$, all SBSs need to update their cached content. Accordingly, we define the action of SBS $b$ at the current epoch as $\bm A_b^t = [a_{f,b}^t, \forall f\in \mc F] $, where element $a_{f,b}^t = l_{f,b}^{t+1}$; and the corresponding action space is given by: 
\begin{align} 
   \mc A_b = \left\{\bm A_b^t | 0 \leq a_{f,b}^t \leq 1,\forall f\in \mc F, \textstyle \sum_{f\in \mc F} a_{f,b}^t \leq L\right\}. \label{eq:action_space}
\end{align}  
As such, a joint action can be given by $\bm A^t = [a_{f,b}^t, \forall f\in \mc F, b \in \mc B]$.


\subsubsection*{Reward} After executing joint action $\bm A$, the system state turns into $\bm S^{t+1} $ with {\it transition probability} Pr$\{\bm S^{t+1} | \bm S^t, \bm A^t \}$. In this cooperative task, all of the agents shall receive a common reward $R(\bm S^{t+1}, \bm S^t, \bm A^t)$, which indicates how good a joint action $\bm A^t$ is.
Therefore, it should be consistent with the goal of reducing fronthaul traffic loads. 
It is clear that the total traffic loads for updating caching resources and satisfying user requests in the coming epoch are given by:
\begin{align}
  C^{t+1} = \sum_{f \in \mc F, b \in \mc B} \max\left\{l_{f,b}^{t+1} - l_{f,b}^t, 0\right \}s + \sum_{k \in \mc K^{t+1}} \max \left\{ 0, 1 - \sum_{b \in \mc B} l_{f,b}^{t+1} e_{k,b}^{t+1} \bigg|_{f = f_k^{t+1}}\right\}s. \label{eq:cost}
\end{align}
Accordingly, we design the reward as: 
\begin{align}
  R^{t+1} = R(\bm S^{t+1}, \bm S^t, \bm A^t) \triangleq - \frac{ C^{t+1}}{|\mc K^{t+1} |s}, 
\end{align}
which indicates how much traffic loads are imposed in order to satisfy each content request in the coming epoch after performing cache updating.

Toward this end, the goal of this study is to find a cooperative caching policy $\pi^*$, which maximizes the total expected cumulative caching reward, i.e.:
\begin{align} 
  \pi^* = \mathop{\arg\max}\limits_{\pi \in \Pi} \mathbb{E} \left[ V \big| \pi \right],  \label{prob:mamdp}
\end{align}   
and the cumulative reward is defined as:
\begin{align}
   V  \triangleq \sum_{t = 0}^{\infty} (\gamma)^{t} R^{t+1},\label{eq:return} 
\end{align} 
where $\pi$ denotes a mapping from state space to action space; $\Pi$ denotes the set of feasible caching policies; and the expectation is over all of the rewards $\{R(\bm S^{t+1}, \bm S^t, \bm A^t)\}$. Furthermore, a characterization for optimal decisions is presented in the following proposition.
\begin{Prop}
\label{prop:opticon}
  Consider that all SBSs are fully loaded at the initial epoch, i.e., $\sum_{f\in \mc F} l^{0}_{f,b} = L, \forall b \in \mc B$. There exists an optimal decision sequence $\{(\bm A^t)^*\}$ satisfying
$
    \sum_{f \in \mc F} (a_{f,b}^t)^* = L, \forall b \in \mc B, 
$
where $\bm (A^t)^* = [(a_{f,b}^t)^*]$, for any $t \geq 1$. 
\end{Prop}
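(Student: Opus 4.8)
The plan is to use an exchange argument: starting from any optimal decision sequence, I would show that whenever an SBS is left below capacity it can be ``topped up'' to full without increasing the total discounted cost, so that fullness can be propagated through every epoch $t \geq 1$. The two structural facts that drive the argument are both visible in the cost \eqref{eq:cost}. First, the cache-miss term $\sum_{k} \max\{0, 1 - \sum_b l_{f,b}^{t+1} e_{k,b}^{t+1}\}s$ is non-increasing in each $l_{f,b}^{t+1}$, so caching more content can only weakly reduce misses. Second, the update term $\sum_{f,b}\max\{l_{f,b}^{t+1} - l_{f,b}^t, 0\}s$ charges only for increases: retaining content that was already present is free, and a larger $l_{f,b}^{t+1}$ can only lower the charge incurred at epoch $t+2$, since it raises the baseline against which $l_{f,b}^{t+2}$ is compared.

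Concretely, writing $o_b^t \triangleq \sum_{f\in\mc F} l_{f,b}^t$ for the occupancy of SBS $b$, I would let $\tau \geq 1$ be the first epoch at which some SBS $b$ is strictly below capacity, $o_b^\tau = L - \delta$ with $\delta > 0$; by minimality (and the hypothesis $o_b^0 = L$) we then have $o_b^{\tau-1} = L$. Since the net drop in occupancy equals total decreases minus total increases, the amount of content decreased between $\tau-1$ and $\tau$, namely $\sum_{f}\max\{l_{f,b}^{\tau-1} - l_{f,b}^{\tau}, 0\}$, is at least $\delta$. I would then define a modified decision $\tilde l_{f,b}^\tau$ that raises the fractions on these decreased items back toward their $\tau-1$ values until the occupancy is restored to $L$, keeping $\tilde l_{f,b}^\tau \le l_{f,b}^{\tau-1} \le 1$ so that feasibility in \eqref{eq:action_space} is preserved.

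The effect of this modification is then checked term by term. At epoch $\tau$ the update charge for each raised item stays $\max\{\tilde l_{f,b}^\tau - l_{f,b}^{\tau-1},0\}=0$, exactly as before, so the update cost is unchanged; the miss cost at $\tau$ weakly decreases by monotonicity; and at epoch $\tau+1$ the charge $\max\{l_{f,b}^{\tau+1} - \tilde l_{f,b}^\tau,0\} \le \max\{l_{f,b}^{\tau+1} - l_{f,b}^\tau,0\}$ weakly decreases because $\tilde l_{f,b}^\tau \ge l_{f,b}^\tau$. No other epoch or SBS is affected, the exogenous weight $\gamma^t/(|\mc K^{t+1}|s)$ attached to each cost term is positive and unchanged by the reshaping, and expectation preserves the inequality, so the total expected cost does not increase and the sequence remains optimal. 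Iterating the construction pushes the first under-capacity epoch strictly later each time; since $\gamma < 1$ makes the tail contribution vanish, the limiting sequence is full at every $t \geq 1$ and is still optimal, which is exactly $\sum_f (a_{f,b}^t)^* = L$.

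The main obstacle I anticipate is the cross-epoch bookkeeping: changing $l^\tau_{f,b}$ alters the baseline used at $\tau+1$, so I must verify that the top-up is always realizable purely by retaining already-cached content (guaranteed by the ``decreases $\ge \delta$'' count) and that iterating the fix over successive epochs, and over all SBSs, never reintroduces a violation at an earlier epoch. The coupling among SBSs enters only through the miss term, which moves in the favorable direction, so treating the SBSs independently for the update-cost part is legitimate; the remaining care is in making the limiting argument rigorous using the discounting.
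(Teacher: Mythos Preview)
Your proposal is correct and follows essentially the same exchange argument as the paper: fix a slot where an SBS is under capacity, top it up by retaining content that was about to be dropped (so the update cost at that epoch is unchanged and the next epoch's update cost can only fall), observe that the miss term is monotone, and conclude optimality is preserved. If anything, your version is more careful than the paper's own proof: you explicitly take the \emph{first} violating epoch to guarantee the previous occupancy equals $L$ (so the ``total decrease $\ge \delta$'' count goes through), and you address the iteration and the $\gamma<1$ tail limit, both of which the paper leaves implicit after fixing a single epoch.
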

\begin{proof}
  See Appendix \ref{appen:A}.
\end{proof}
\begin{Remark}
Proposition \ref{prop:opticon} implies that optimal caching decisions are likely to be the case where caching units are fully loaded. 
This result is reasonable and would provide further insight for algorithm design.
Nevertheless, calculating an optimal cooperative caching policy offline depends on knowledge of network dynamics (e.g., transition probability $Pr\{\bm S' | \bm S, \bm A\}$), which is generally difficult to obtain in real applications. Even if this knowledge could be obtained, problem \eqref{prob:mamdp} is still intractable due to its no closed-form expression. 
In view of this, one can resort to DRL to handle this problem through utilizing historical experiences without knowing exact dynamic information. On the other hand, as previously mentioned, the decision variables $\{a_{f,b}^t\}$ for cooperative coded caching are continuous. 
Simply quantizing these variables into discrete values may lead to performance loss as well as an exponentially large number of actions. For example, consider a very small scenario: three SBSs store parity bits of 10 content items, and each continuous decision variable $a_{f,b}^t$ is coarsely quantized into five discrete values within $[0,1]$; then, the resulting actions are $5^{30}$ at most,  leading to value-based RL algorithms (e.g.,  deep Q learning) that are intractable. 
\end{Remark}

\section{A Novel Homotopy DDPG}
To develop a working DRL algorithm for the considered problem, we first introduce a policy-based RL and identify the arising challenges. Then, we recast a general policy based RL problem with constraints into a tractable form, which is suitable to be addressed by leveraging homotopy optimization. Finally, we custom-build a novel DRL algorithm by embedding homotopy optimization into DDPG. 

\subsection{Fundamentals of DDPG}
DDPG is one of the policy-based RL algorithms, which is widely used to handle continuous decision-making \cite{Lillicrap2015ContinuousCW}. Built upon actor-critic architectures, this algorithm employs DNNs as function approximators to learn a deterministic policy that can map high-dimensional states into feasible continuous actions. 
Typically, a DDPG-based RL framework consists of two networks, i.e., 
critic and actor, which are detailed as follows. 
\subsubsection*{Actor} The actor network corresponds to a deterministic policy, which can generate an action $\bm A$ under a given system state $\bm S$, i.e.,
$
\bm A = \pi_{\bth} (\bm S),  
$ 
and $\bth$ is the parameter of the associated DNN. 
This parametrized policy $\pi_{\bth}(\cdot)$ aims to maximize the expected cumulative reward, i.e.:
\begin{align}
  J(\bth) =   \mathbb{E} \left[ V \big| \pi_{\bth} \right].
\end{align}

\subsubsection*{Critic} The critic network $ Q_{\bphi} (\bm S, \bm A) $  serves as an estimator to predict an action-value function (also termed as Q-function), i.e., 
$
 \mathbb{E} \left[ V| \bm S, \bm A, \pi_{\bth} \right],
$
and $\bphi$ denotes the parameter of the associated DNN. 
In general, the critic is designed to fine-tune the actor, which yields
\begin{align}
  \pi_{\bth} (\bm S) = \arg\max_{\bm A \in \mc A} {Q_{\bphi}(\bm S, \bm A)}. 
\end{align}
By recalling \eqref{eq:return}, it is expected to have the following recursive equation:
\begin{align}
  Q_{\bphi}(\bm S, \bm A) = \mathbb{E}_{\bm S', R|\bm S, \bm A} \left [R + \gamma Q_{\bphi}\left(\bm S', \pi_{\bth} (\bm S')\right) \right], 
\end{align}
where $\bm S'$ denotes the subsequent state after taking $\bm A$ under state $\bm S$; $R$ denotes the corresponding instant reward; and the expectation is over all of the possible occurrences of $(\bm S', R)$. 

\subsubsection*{Learning Algorithm} As a category of policy gradient approaches, actor parameter $\bth$ is updated by using stochastic gradient descent, where the gradient of the policy can be given by {\it Deterministic Policy Gradient Theorem} \cite{Lillicrap2015ContinuousCW}; concerning the critic network, parameter $\bphi$ is updated according to {\it Temporal Difference}. Readers are referred to \cite{Lillicrap2015ContinuousCW} for greater details. 

Although DDPG has achieved great success in addressing many continuous decision-making tasks, the action space in our problem (defined by \eqref{eq:action_space}) could restrain it from being efficient. Specifically, to confine the output of the actor to be feasible, a simple idea is to use the activation function {\it SoftMax} to normalize the output of the last hidden layer, which is then filtered by multiplying a scaling factor (e.g., $L$). This idea has been used in \cite{zhang2019accelerated}. We should point out that the resulting elements (i.e., $a_{f,b}^t, \forall f,b$) could surpass 1 when $L \gg 1$; directly clipping it to 1 may lead to a very poor caching decision if an element $a_{f,b}^t = L$ exists. This practice contradicts Proposition \ref{prop:opticon} and could degrade the performance of DDPG. In the following subsections, we formally analyze this issue and propose an efficient approach to overcome this challenge. 

\subsection{A Homotopy Optimization Based Approach}
For a class of RL problems, the corresponding action space $\mc A$ could be some constraints inefficient to be directly satisfied through designing DNNs, i.e., $\mu_{\bth}$. More specifically, we consider the following situation: let set $\mc A_{\bth}$ collect all of the proto-actions as a result of $\mu_{\bth} (\bm S), \forall \bm S \in \mc S$; and feasible actions $\bm A$ may only lie on a subset of $\mc A_{\bth}$, i.e., $\mc A \subseteq \mc A_{\bth}$. To deal with this issue, a straightforward approach is to use a mapping function $\siga(\cdot)$, which can project a proto-action $\mu_{\bth} (\bm S)$ into the action space, i.e., $\sigma_{\mc A} \left[\mu_{\bth} (\bm S) \right] \in \mc A$. Thus, a feasible policy function can be given by $\pi_{\bth}(\cdot) = \siga[\mu_{\bth}(\cdot)]$.
Accordingly, the associated policy-based RL problem is supposed to take the following form:
\begin{align}
  \max_{\bth} J(\bth|\sigma_{\mc A})  \triangleq \mathbb{E} \left[ V \big| \mu_{\bth}, \sigma_{\mc A} \right]. \label{prob:pararl}
\end{align}

Nevertheless, for many constrained RL applications, poor actions are likely to be generated after projection. Like the example in the previous subsection, mapping a proto-action with a dominant element $a_{f,b}^t = L$ into an one-hot vector may lead to a very sparse caching vector; this case implies a very low caching resource available at SBSs. When frequently encountering  this instance during training, using mapping methods may not guarantee network parameters to be efficiently updated. Thus, it will lead to a suboptimal policy.

To remedy this method, a natural idea is to seek a proper way to penalize the performance loss caused by mapping a proto-action $\mu (\bm S^t)$ into a feasible one given any state $\bm S^t$. 
 Let $g(\cdot| \siga)$ be a general penalty function, which needs to be designed according to the corresponding problem. In the coded caching problem, inspired by Proposition \ref{prop:opticon}, a penalty function can be given by
\begin{align} 
 g(\bm S|\siga) = BL - \|\text{vec}(\siga(\bm A)\|_1\big|_{\bm A = \mu_{\bth} (\bm S)},       
\end{align}  
where $\|\cdot\|_1$ denotes $l_1$-norm; and this penalty indicates the remaining storage over all caching units after taking action $\siga(\bm A)$.

The proposed approach is then built upon maximizing a homotopy function:
\begin{align}
   &\max_{\bth} J_{\h}(\bth| \lambda, \sigma_{\mc A})  \triangleq J(\bth|\sigma_{\mc A}) + \lambda  G,  \label{prob:hrl}
\end{align}
where the discount cumulative penalty, i.e., $ G = \mathbb{E} \left[ \sum_{t = 0}^{+\infty} (\gamma)^{t} g (\bm S^{t}|\siga) \right]$, 
is finite due to a discount factor $\gamma \in [0,1)$; and $\lambda \leq 0$ is a homotopy variable, such that:
\begin{align}
  J_{\h}(\bth|\lambda, \sigma_{\mc A}) =
  \begin{cases}
    J(\bth|\sigma_{\mc A}), & \lambda = 0, \\
    J_{\h}(\bth|\lambda_{\min}, \sigma_{\mc A}), & \lambda = \lambda_{\min}.  
  \end{cases}
\end{align}

At this stage, we introduce the following lemma to address problem \eqref{prob:pararl} through a typical homotopy optimization method \cite{dunlavy2005homotopy}.

\begin{Lemma}
\label{Lemmaa:1}
  On the basis of homotopy optimization, one can initialize a sequence of positive values, i.e., $\delta^i, i = 1, \cdots, I$,  subject to:
\begin{align}
   \sum_{i = 1}^{I} \delta^i = -\lambda_{\min}, \label{eq:homcon} 
 \end{align} 
and also initialize a point $(\bth^0, \lambda^0)$, where  $\bth^0$ denotes a (local) optimizer of $J_{\h}(\bth| \lambda^0, \sigma_{\mc A})$ and  $\lambda^0 = \lambda_{\min}$; and then iterate the following update:
\begin{align}
   \lambda^{i} =  \lambda^{i-1} + \delta^{i}, \label{eq:lbdupdate}
\end{align}
and calculate a local optimizer $\bth^{i}$ of $ J_{\h}(\bth |\lambda^i, \sigma_{\mc A})$ by using gradient descent starting from $\bth^{i-1}$. Eventually, this homotopy approach is able to result in point $(\bth^{I}, 0)$, where $\bth^{I}$ is a local minimizer of problem \eqref{prob:pararl} \cite{dunlavy2005homotopy}.
\end{Lemma}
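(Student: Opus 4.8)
The plan is to treat this as a verification that the stated update scheme is a bona fide homotopy-continuation procedure whose terminal subproblem coincides with \eqref{prob:pararl}, so that the convergence guarantee of the standard homotopy method in \cite{dunlavy2005homotopy} transfers directly. The argument splits cleanly into an elementary bookkeeping step that pins down the terminal homotopy value, and a regularity-based step that justifies branch tracking along the continuation path.

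First I would establish the terminal condition $\lambda^I = 0$. Unrolling the recursion \eqref{eq:lbdupdate} from the initialization $\lambda^0 = \lambda_{\min}$ gives $\lambda^I = \lambda^0 + \sum_{i=1}^I \delta^i$, and substituting the constraint \eqref{eq:homcon}, i.e. $\sum_{i=1}^I \delta^i = -\lambda_{\min}$, yields $\lambda^I = \lambda_{\min} - \lambda_{\min} = 0$. Since each $\delta^i > 0$, the sequence $\{\lambda^i\}$ is strictly increasing and stays in $[\lambda_{\min}, 0]$, so the homotopy variable sweeps monotonically from the fully penalized problem toward the target. By the definition of the homotopy function in \eqref{prob:hrl}, at $\lambda = 0$ the penalty term vanishes and $J_{\h}(\bth|0, \sigma_{\mc A}) = J(\bth|\sigma_{\mc A})$; hence any local optimizer of $J_{\h}(\bth|\lambda^I, \sigma_{\mc A})$ is automatically a local optimizer of the original objective in \eqref{prob:pararl}. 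This already reduces the lemma to showing that the continuation actually delivers such a local optimizer at the final step.

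Next I would address the tracking step. Viewing $J_{\h}(\bth|\lambda, \sigma_{\mc A})$ as a jointly smooth function of $(\bth, \lambda)$, the initialization supplies a local optimizer $\bth^0$ at $\lambda^0 = \lambda_{\min}$. Assuming non-degeneracy of the optimizer along the path (i.e. the Hessian in $\bth$ is nonsingular), the implicit function theorem yields a locally unique, continuous branch $\bth(\lambda)$ of stationary points emanating from $\bth^0$. For a sufficiently small increment $\delta^i$, the warm start $\bth^{i-1}$ lies inside the basin of attraction of the neighboring optimizer $\bth(\lambda^i)$, so gradient descent initialized at $\bth^{i-1}$ converges to $\bth^i = \bth(\lambda^i)$. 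Iterating over $i = 1, \ldots, I$ traces the branch continuously to $\lambda^I = 0$, and the endpoint $\bth^I$ is therefore a local optimizer of $J_{\h}(\bth|0, \sigma_{\mc A}) = J(\bth|\sigma_{\mc A})$; combined with the first step, this is precisely the claim. The lemma phrases the conclusion as a \emph{minimizer} only because \cite{dunlavy2005homotopy} is stated for minimization, which one recovers by applying the result to $-J_{\h}$, under which maximizers of $J_{\h}$ become minimizers.

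I expect the branch-tracking step to be the genuine obstacle. The reduction to $\lambda^I = 0$ is pure bookkeeping, but guaranteeing that gradient descent from the warm start lands on the correct branch relies on non-degeneracy of the optimizers and on the step sizes $\delta^i$ being small relative to the curvature of the solution path---conditions that cannot be verified for a non-convex DNN parametrization $\mu_{\bth}$. My intended resolution is to state these regularity assumptions explicitly (smoothness of $J_{\h}$ in $(\bth,\lambda)$, isolated non-degenerate local optimizers, and sufficiently fine $\{\delta^i\}$) and then invoke the established convergence guarantee of the homotopy method in \cite{dunlavy2005homotopy} to certify the endpoint, rather than attempting a self-contained convergence proof for the neural parametrization.
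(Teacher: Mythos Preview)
The paper does not actually prove Lemma~\ref{Lemmaa:1}; it is stated without proof and attributed directly to \cite{dunlavy2005homotopy}, with only a brief paragraph of motivation following the statement. Your proposal is consistent with this: the terminal-value bookkeeping and the reduction of the final subproblem to \eqref{prob:pararl} are correct, and your plan to invoke the convergence guarantee from \cite{dunlavy2005homotopy} under explicit regularity assumptions is exactly how the paper treats the lemma---as a cited result rather than something to be established from scratch. In fact you supply more justification than the paper does, since the paper offers neither the $\lambda^I=0$ computation nor any discussion of branch tracking or the required smoothness and step-size conditions.
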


The motivation of the homotopy optimization approach is follows: Starting with a sufficient small value of $\lambda_{\min} < 0$, a very large cumulative penalty $|\lambda_{\min} G|$ may penalize the corresponding policy (parametrized by ${\bth^0}$) to generate intended actions, e.g., caching decisions that fully exploit available caching storage in the considered problem. Thereafter, by using homotopy optimization, we attempt to carefully tune policy parameter ${\bth^0}$ to a (local) optimizer $\bth^{I}$ of the original problem \eqref{prob:pararl}, which is likely to produce good decisions despite of applying mapping function $\siga(\cdot)$.
 
\subsection{Proposed DRL Algorithm}
In each iteration of homotopy optimization, computing an optimizer (e.g.,$\bth^{i}$) of problem \eqref{prob:hrl} offline is somehow impractical under an unknown temporally evolving environment. For this reason, we custom-build HDDPG for problem \eqref{prob:pararl} by recasting the basic elements of DRL and unfolding the iterative procedure of homotopy optimization introduced in Lemma \ref{Lemmaa:1}. 

Specifically, as an actor-critic approach, HDDPG maintains a parametrized critic $Q_{\h, \bphi} (\bm S, \bm A)$ and actor $\mu_{\bth} (\bm S)$ in addition to a mapping function $\siga(\cdot)$, where a feasible policy is given by $\pi_{\bth}(\bm S) = \siga[\mu_{\bth} (\bm S)]$. Following the sketch of a plain DDPG,\footnote{For clarity, we term the DRL algorithm proposed in \cite{Lillicrap2015ContinuousCW}  as plain DDPG.} we introduce the proposed algorithm as follows. Evidently, the objective $J_{\h} (\bth| \lambda, \sigma_{\mc A})$ can be equivalently reformulated as:
\begin{align}
  J_{\h} (\bth| \lambda, \sigma_{\mc A}) = \mathbb{E} \left[\sum_{t= 0}^{+\infty} (\gamma)^{t} \left( R^{t+1} + \lambda g(\bm S^{t}|\siga) \right) \right].
\end{align}
Accordingly, we define the homotopy reward after taking action $\bm A^t$ as:  
\begin{align}
  R_\h^{t+1} =  R_\h(\bm S^{t+1}, \bm S^t, \bm A^t ) \triangleq  R^{t+1} + \lambda g(\bm S^t|\siga), \label{eq:hreward}
\end{align}
which is known at epoch $t+1$. 
Then, the homotopy Q-function can be given by: 
\begin{align}
  Q_\h(\bm S^t, \bm A^t) = \mathbb{E} \left[ \sum_{\tau = 0}^{+\infty} (\gamma)^{ t + \tau} R_{\h}^{t + \tau+1} \bigg|\bm S^t, \bm A^t, \mu_{\bth}, \siga \right], 
\end{align}
which implies the discount cumulative homotopy reward after taking an action $\bm A^t$ under state $\bm S^t$ and thereafter following policy $\pi_{\bth}(\cdot) = \siga[\mu_{\bth} (\cdot)]$. 
As a direct deduction of the Bellman optimality equation \cite{sutton1998introduction}, we have the following {\it Homotopy Bellman Optimality Equation}. 
\begin{Lemma}
\label{lamma:2}
  An optimal $Q_\h^*(\bm S, \bm A)$ satisfies the following recursive equality:
  \begin{align}
    Q_\h^*(\bm S, \bm A) = \mathbb{E}_{R_\h, \bm S'|\bm S,\bm A} \left[ R_\h + \gamma \max_{\bm A' \in \mc A} Q^*_{\h} (\bm S', \bm A')  \right],
  \end{align}
  where $\bm S'$ denotes the subsequent state after taking an optimal action $\bm A$; and $R_\h$ denotes the associated homotopy reward. 
\end{Lemma}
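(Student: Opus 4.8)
The plan is to treat the homotopy formulation as a standard MDP with a redefined reward and then specialize the classical Bellman optimality argument to it. First I would observe that, by \eqref{eq:hreward}, the homotopy reward $R_\h^{t+1} = R^{t+1} + \lambda g(\bm S^t|\siga)$ is a bounded, deterministic function of the current state, the chosen action, and the successor state: $R^{t+1}$ depends on $(\bm S^{t+1}, \bm S^t, \bm A^t)$ through \eqref{eq:cost}, while $g(\bm S^t|\siga)$ depends only on $\bm S^t$ and the fixed mapping $\siga$. Consequently, the tuple $(\mc B, \mc S, \mc A, \mc P, R_\h, \gamma)$ constitutes a legitimate cooperative multi-agent MDP sharing the same state space, action space, and transition kernel $\mc P$ as the original problem, but driven by the augmented reward $R_\h$. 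Boundedness of $R_\h$ together with $\gamma \in [0,1)$ guarantees that the discounted cumulative homotopy reward, and hence $Q_\h$, is finite and well defined.

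Given this reduction, the key steps mirror the derivation of the Bellman optimality equation. I would define $Q_\h^*(\bm S, \bm A)$ as the optimal homotopy action-value, i.e., the supremum over feasible policies of the expected discounted homotopy return obtained after taking $\bm A$ in state $\bm S$. Peeling off the first term of the return, it decomposes into the immediate homotopy reward $R_\h$ plus $\gamma$ times the homotopy return accrued from the successor state $\bm S'$ onward. The Markov property renders this tail conditionally independent of the history $(\bm S, \bm A)$ once $\bm S'$ is revealed, so by the tower rule I can condition on $(R_\h, \bm S')$ and factor the expectation. Invoking the principle of optimality — the optimal continuation from $\bm S'$ is achieved state-wise — the optimal value-to-go at $\bm S'$ equals $\max_{\bm A'\in\mc A} Q_\h^*(\bm S', \bm A')$. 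Substituting this and taking the expectation over $(R_\h, \bm S')$ conditioned on $(\bm S, \bm A)$ yields exactly the asserted recursion.

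The step I expect to be the main obstacle is justifying the interchange between the outer supremum over policies and the inner, state-wise maximization over actions, i.e., establishing that the optimal return-to-go at $\bm S'$ can be selected pointwise as $\max_{\bm A'} Q_\h^*(\bm S',\bm A')$. This is the content of the principle of optimality and hinges on (i) the Markovian, time-homogeneous structure of the augmented MDP, which decouples the future from the past, and (ii) the existence of a maximizing action at each state — automatic here since each $\mc A_b$ in \eqref{eq:action_space} is closed and bounded, so the joint action space $\mc A$ is compact and the continuous $Q_\h^*$ attains its maximum over $\mc A$. Once these two facts are in place, the result follows as a direct specialization of the standard Bellman optimality equation \cite{sutton1998introduction} to the reward $R_\h$, with no additional machinery required.
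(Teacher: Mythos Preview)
Your proposal is correct and matches the paper's approach exactly: the paper does not give a separate proof of this lemma but simply states, immediately before it, that it is ``a direct deduction of the Bellman optimality equation \cite{sutton1998introduction},'' which is precisely the reduction you carry out. Your write-up supplies the details the paper omits --- namely, that $R_\h$ is a bona fide reward on the same MDP so the standard Bellman argument applies --- but the underlying idea is identical.
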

Accordingly, to estimate an optimal homotopy Q-function, the critic $ Q_{\h, \phi}(\bm S, \bm A)$ can be learned by using Lemma \ref{lamma:2}. Specifically, we update $\bphi$ by minimizing the following loss function:
\begin{align}
  Loss\left(\bphi\right) =  \mathbb{E}_{\xi_\h}  \left[ \left(y - Q_{\h, \phi}(\bm S, \bm A)\right)^2 \right], \label{eq:loss0}
\end{align}
where $\xi_{\h} = (\bm S, \bm A, R_{\h}, \bm S')$; and $y$ denotes the target value: 
\begin{align}
  y = R_\h + \gamma Q_{\h, \phi}\left(\bm S', \mb A' \right)\big|_{\bm A' = \pi_{\bth}(\bm S')}. \label{eq:target1}
\end{align}

Regarding the update of the actor, it depends on the gradient of the objective $J_\h (\bth| \lambda, \siga )$, which brings us to the following the {\it Deterministic Policy Gradient Theorem for HDDPG}. 
\begin{Lemma}
\label{prop:homgrad} 
Consider a homotopy deep deterministic policy with a continuous action space $\mc A$ and a homotopy variable $\lambda$, as well as a mapping function $\sigma_{\mc A}$. Suppose that $\siga(\cdot)$ is continuous. Then, the deterministic policy gradient exists when $\nabla_{\bth} \mu(\bm S)$ and $\nabla_{\bm A} Q_{\h, \bphi} (\bm S, \bm A)$ exist, i.e.:  
\begin{align}
   \nabla_{\bth} J_{\h} (\bth| \lambda, \sigma_{\mc A}) \triangleq \mathbb{E}_{\bm S}     \left[\nabla_{\bm A} Q_{\h, \phi} (\bm S, \bm A)\big|_{\bm A = \siga[\mu_{\bth} (\bm S)]}    \nabla_{\bm A'} \siga( \bm A')|_{\bm A' = \mu_{\bth}(\bm S)}       \nabla_{\bth} \mu_{\bth} (\bm S) \right]. \label{eq:homgrad}
\end{align}
\begin{proof}
  See Appendix \ref{appen:B}.
\end{proof} 
\end{Lemma}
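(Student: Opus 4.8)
The plan is to reduce the claim to the classical deterministic policy gradient (DPG) theorem of \cite{Lillicrap2015ContinuousCW} by exploiting two structural observations. First, the homotopy objective $J_{\h}(\bth|\lambda,\siga)$ is exactly the expected discounted return of an ordinary MDP whose instantaneous reward is the homotopy reward $R_{\h}$ of \eqref{eq:hreward}; since the penalty $\lambda g(\bm S^t|\siga)$ depends on $\bth$ only through the action actually executed at $\bm S^t$, namely $\pi_{\bth}(\bm S^t)=\siga[\mu_{\bth}(\bm S^t)]$, it behaves precisely like a state--action reward term and is absorbed into the homotopy Q-function $Q_{\h}$. Second, the effective policy is the composition $\pi_{\bth}=\siga\circ\mu_{\bth}$, so every derivative with respect to $\bth$ factors through $\mu_{\bth}$ by the chain rule, producing exactly the intermediate Jacobian $\nabla_{\bm A'}\siga(\bm A')|_{\bm A'=\mu_{\bth}(\bm S)}$ appearing in \eqref{eq:homgrad}. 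The rest is then a transcription of the standard argument with $Q\mapsto Q_{\h}$, $R\mapsto R_{\h}$, and $\mu_{\bth}\mapsto\siga\circ\mu_{\bth}$.

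Concretely, I would first record the on-policy homotopy Bellman equation $Q_{\h}(\bm S,\bm A)=\mathbb{E}_{R_{\h},\bm S'|\bm S,\bm A}[R_{\h}+\gamma Q_{\h}(\bm S',\pi_{\bth}(\bm S'))]$, which follows directly from the definition of $Q_{\h}$ as the discounted cumulative homotopy reward under $\pi_{\bth}$ (the optimality version being Lemma~\ref{lamma:2}). Introducing the value function $V_{\h}(\bm S)=Q_{\h}(\bm S,\pi_{\bth}(\bm S))$ and differentiating with the chain rule gives, schematically with factors ordered as in \eqref{eq:homgrad},
\begin{align*}
\nabla_{\bth}V_{\h}(\bm S)=\nabla_{\bm A}Q_{\h}(\bm S,\bm A)\big|_{\bm A=\pi_{\bth}(\bm S)}\,\nabla_{\bm A'}\siga(\bm A')\big|_{\bm A'=\mu_{\bth}(\bm S)}\,\nabla_{\bth}\mu_{\bth}(\bm S)+\gamma\!\int p(\bm S'|\bm S,\pi_{\bth}(\bm S))\,\nabla_{\bth}V_{\h}(\bm S')\,d\bm S',
\end{align*}
where the transition kernel $p(\bm S'|\bm S,\cdot)$ and the reward carry no explicit $\bth$-dependence once the action is fixed, so the only contributions are the direct action-gradient term and the propagated future term. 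I would then unroll this recursion, collecting the geometric weights $\gamma^t$ together with the $t$-step transition kernels into the discounted state-visitation measure $\rho^{\pi_{\bth}}$, and finally average $V_{\h}$ over the initial-state distribution to recover $J_{\h}$. Pushing $\nabla_{\bth}$ through this average and substituting the unrolled expression yields \eqref{eq:homgrad} with the expectation taken over $\rho^{\pi_{\bth}}$.

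The main obstacle is analytical rather than algebraic: justifying the interchange of $\nabla_{\bth}$ with the infinite sum and the state integrals, and ensuring the fixed point of the recursion is itself differentiable. This is exactly where the hypothesis that $\siga$ is continuous, together with the existence of $\nabla_{\bth}\mu_{\bth}$ and $\nabla_{\bm A}Q_{\h,\bphi}$, is used: combined with continuity of the transition kernel and boundedness of $R_{\h}$ (finite because $\gamma\in[0,1)$), it makes $V_{\h}$ continuously differentiable in $\bth$ and renders the exchanged limits uniformly convergent, precisely as in the regularity conditions underlying the DPG theorem \cite{Lillicrap2015ContinuousCW}. One point worth checking, which turns out to be benign, is the smoothness of the penalty: because every feasible action component lies in $[0,1]$ and is hence nonnegative, $\|\text{vec}(\siga(\bm A))\|_1$ reduces to the plain sum $\sum_{f,b}[\siga(\bm A)]_{f,b}$, which is linear and therefore smooth in the action, so the $l_1$-norm introduces no extra nondifferentiability beyond that already assumed for $\siga$.
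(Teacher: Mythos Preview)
Your proposal is correct and follows essentially the same route as the paper: both reduce to the standard deterministic policy gradient argument of \cite{silver2014deterministic,Lillicrap2015ContinuousCW} by treating $R_{\h}$ as an ordinary reward, differentiating $Q_{\h}(\bm S,\pi_{\bth}(\bm S))$ via the chain rule through the composition $\siga\circ\mu_{\bth}$, unrolling the resulting one-step recursion, and collecting the $\gamma^{t}$-weighted transition kernels into the discounted state distribution $\rho$. Your discussion of the regularity conditions and of why the $l_{1}$ penalty is smooth on the nonnegative action orthant is more explicit than the paper's, but the underlying proof skeleton is the same.
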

Finally, we leverage inexact gradient descent methods to update $\{\bth, \bphi, \lambda\}$ \cite{cassioli2013convergence}. In particular, the updates of $\bphi, \bth$ occur at each epoch, i.e.:
\begin{align}
  \bphi \leftarrow \bphi - \alpha_c \nabla_{\bphi} Loss(\bphi), \label{eq:up3} \\
     \bth \leftarrow \bth + \alpha_a \nabla_{\bth} J_{\h} (\bth| \lambda, \sigma_{\mc A}) , \label{eq:up4}
\end{align} 
where $\alpha_{c}$ and $\alpha_a$ are the learning rates of the critic and actor, respectively; and  $\lambda$ can be updated by a slow circle, i.e., 
after every $I_0$ epochs, one can execute the following:
\begin{align}
  \lambda^{iI_0} \leftarrow \lambda^{(i-1)I_0} + \delta^{i}, \label{eq:lbd1}
\end{align}
where sequence $\{\lambda^i\}_{i=1}^{I}$ should meet the equality in \eqref{eq:homcon}. 

\begin{Remark}
   In contrast with plain DDPG, which constitutes a special case of the proposed HDDPG, i.e., $\lambda = 0$, properly introducing a penalty term into the objective function assists to infer which actions should be better to take and avoid becoming stuck in suboptimal solutions.
   More importantly, we unfold the homotopy optimization approach in Lemma \ref{Lemmaa:1} into a DRL, which can be done through interacting with environments. 
\end{Remark}
In the ensuing sections, we will apply HDDPG to the cooperative coded caching problem and propose a centralized caching design, and further generalize HDDPG in decentralized settings to reduce complexity and communication cost.

\section{Centralized HDDPG-based Cooperative Coded Caching}
In this section, we introduce a centralized HDDPG (C-HDDPG) design for multi-agent cooperative coded caching. As illustrated in Fig. \ref{fig:stru}(a), the system operation is at the level of centralized control. The CP serves as a centralized agent and coordinates the cooperative caching policies for all SBSs based on global information. To realize this, the CP maintains a (centralized) critic network $Q_{\bphi} (\bm S, \bm A)$ and a (centralized) actor network $\mu_{\bth}(\bm S)$, as well as a mapping function $\siga(\cdot)$. In what follows, we first introduce a detailed implementation of the proposed centralized design, and then analyze its communication overhead and complexity. 

\begin{figure*}
\centering
  \begin{subfigure}[t]{.31\linewidth}
  \centering
  \includegraphics[width=5cm, height=4cm]{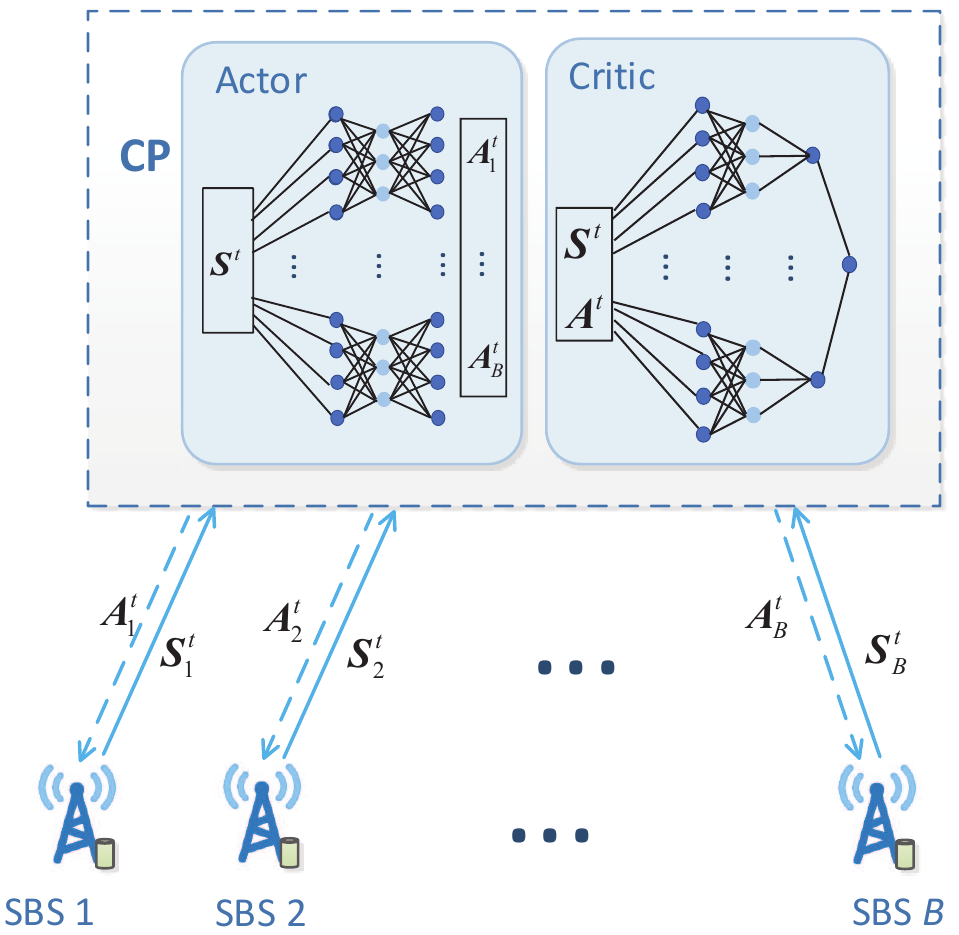}
  \caption{Centralized control.}
  \label{fig:stru-cen}
\end{subfigure}
\begin{subfigure}[t]{.31\linewidth}
  \centering
  \includegraphics[width=5cm, height=4cm]{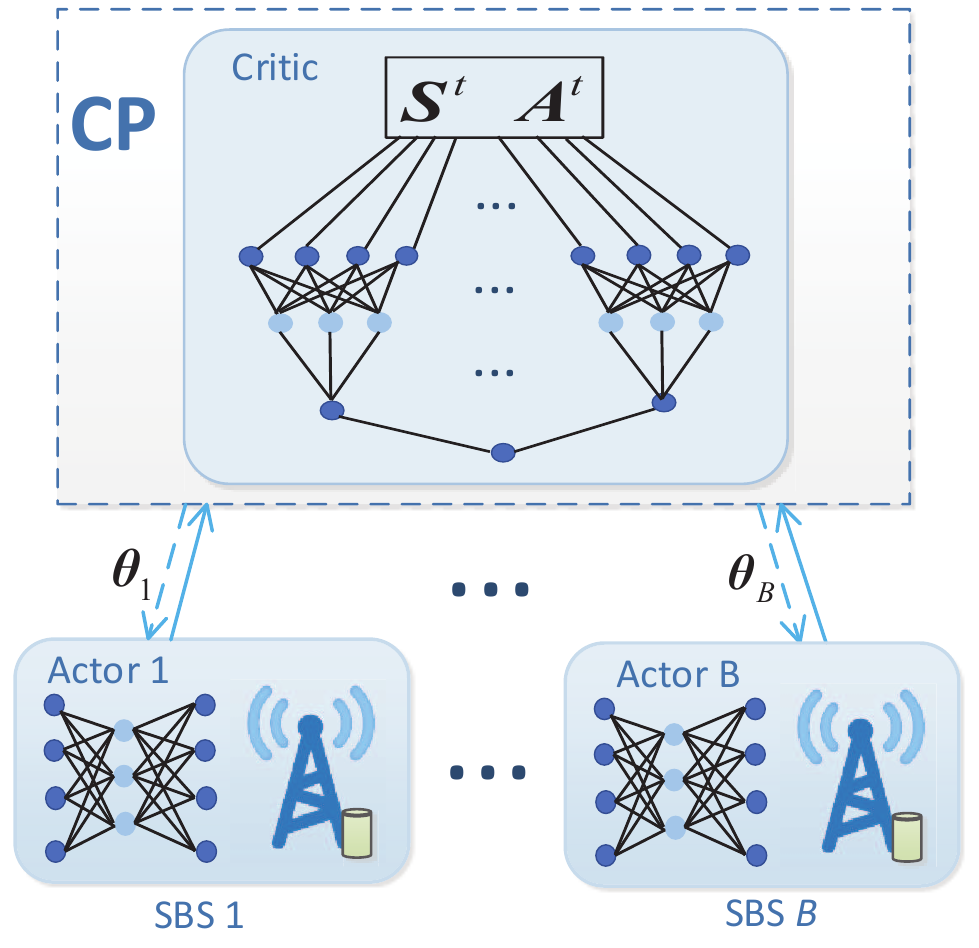}
  \caption{Partially decentralized control.}
  \label{fig:stru-ma}
\end{subfigure}
\begin{subfigure}[t]{.31\linewidth}
  \centering
  \includegraphics[width=5cm, height=4cm]{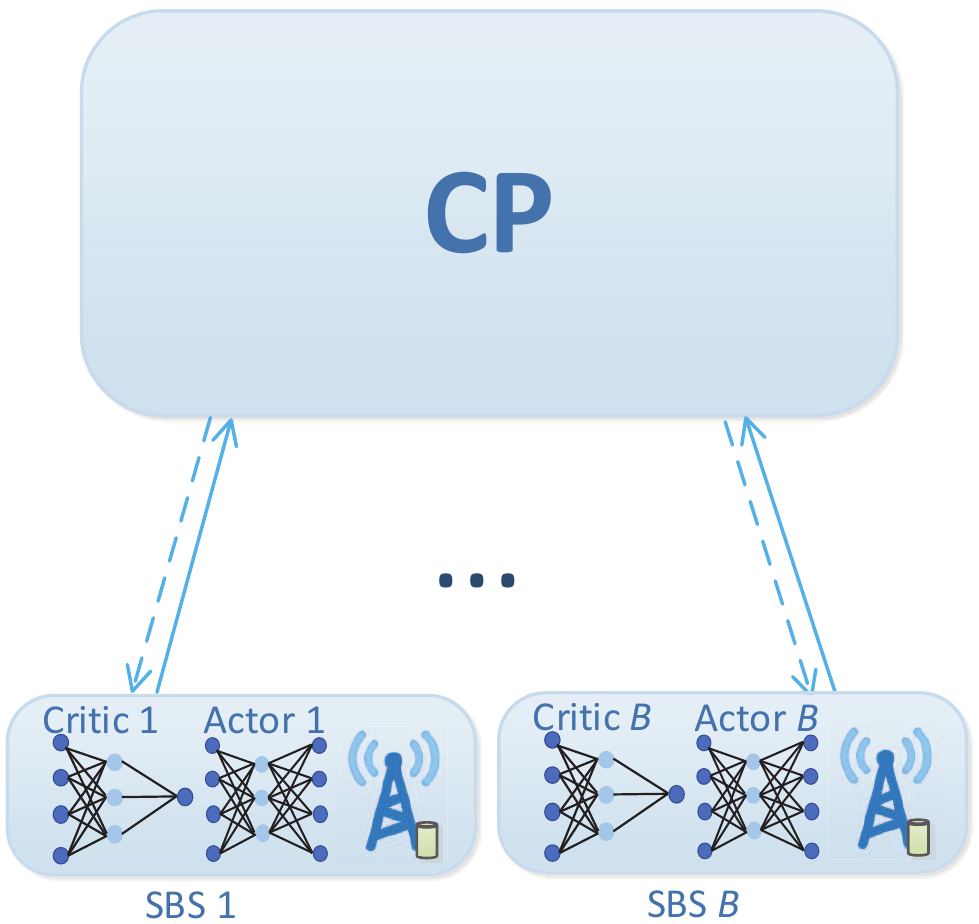}
  \caption{Fully decentralized control.}
  \label{fig:stru-fd}
\end{subfigure}
\caption{Proposed multi-agent DRL-based cooperative coded caching framework.}
\label{fig:stru}
\end{figure*}
\subsection{Proposed Centralized HDDPG-based Design}

The system operation includes two procedures, i.e., network training and network evaluation. In general,  during network evaluation, the CP simply leverages the actor and mapping function to make caching decisions, while the critic is only necessary during training procedure to fine-tune the actor. 
The details of network design and training procedure are introduced as follows.

\subsubsection*{Network Design} In general, both networks, i.e., critic and actor, can be implemented by fully connected DNNs where each hidden layer has a batch of neurons and an activation function to perform nonlinear transformations \cite{lecun2015deep}.
The output of the critic should be a scalar, which corresponds to the estimated value of the Q-function.  
To generate feasible actions, we elaborate on how to design the actor network $\mu_{\bth}$ and mapping function $\siga$.
It is evident that the number of neurons in the output layer of  $\mu_{\bth}$ should match the dimension of a joint action, i.e., $F\times B$ (and these neurons output a long vector $\bm z = [z_{1,1}, z_{2,1}, \cdots, z_{f,b}, \cdots, z_{F,B}]$). Then, we use the following activation function (e.g., realized by {\it Scaling} and {\it SoftMax}) to refine $\bm z$, i.e.:
\begin{align}
  \sigma_{f,b} (\bm z) = L \times  \frac{\exp(z_{f,b})}{\sum_{ f' \in \mc F} \exp(z_{f',b})}, \forall f \in \mc F, \forall b \in \mc B,
\end{align}
which thereafter is filtered by a mapping function\footnote{For instance, $\min\{\mb 1, \bm X\}$ is an element-wise operator that executes $\max \{1, x_{f,b}\}$ for any element $x_{f,b}$ of $\bm X$.} $\siga (\cdot)= \min\{\bm 1_{F\times B}, \cdot\}$. Accordingly, any proto-action $\mu_{\bth} (\bm S)$ can be mapped into a feasible action, i.e., $ \min\{\bm 1_{F\times B}, \mu_{\bth} (\bm S) \}$. 

\subsubsection*{Update} To proceed, the technique of {\it Replay Buffer (RB)} $\Xi$ is introduced to store historical experiences $\xi^t = (\bm S^t, \bm A^t, R_\h^{t+1}, \bm S^{t+1})$, which serves as the data set for network training. The buffer size $|\Xi|$ is usually finite, and thus the most outdated experience should be replaced by the current one as long as $\Xi$ is fully loaded.
Subsequently, at each epoch, we can randomly sample a mini-batch of $N$ experiences (e.g., set $\Xi_N $) from {\it RB} to update parameters of the critic and the actor networks. More concretely, 
parameter $\bphi$ of the critic network can be updated by minimizing the following loss function:
\begin{align}
  Loss\left(\bphi\right) =  \mathbb{E}_{\xi^t \sim \Xi_N}  \left[\left( y_{\bphi^-}^t  - Q_{\h, \phi}(\bm S^{t}, \bm A^{t})\right)^2 \right], \label{eq:loss}
\end{align}
where the expectation is over all of the sampled experiences; $y_{\bphi^-}^t$ denotes the target value:
\begin{align}
  R_\h^{t+1} + \gamma Q_{\h, \phi^-}\left(\bm S^{t+1}, \siga[\mu_{\bth^-} (\bm S^{t+1})]\right), \label{eq:target}
\end{align}
and $Q_{\h, \bphi^-}(\bm S, \bm A)$ and $\mu_{\bth^-}(\bm S)$ denote the target critic and the target actor with parameters $\bphi^-$ and $\bth^-$, respectively. To stabilize training \cite{Lillicrap2015ContinuousCW}, target networks should be slowly updated, i.e.:
\begin{align}
  \bphi^- \leftarrow \tau \bphi + (1 - \tau) \bphi^-, \label{eq:up1}\\
  \bth^- \leftarrow \tau \bth + (1 - \tau) \bth^-, \label{eq:up2}
\end{align}
where $\tau$ is a very small step size. 
With regard to updating parameter $\bth$, 
the corresponding homotopy deterministic policy gradient $\nabla_{\bth} J_{\h}$ can be estimated by \eqref{eq:homgrad}. 
In addition, the homotopy variable should be updated according to \eqref{eq:lbd1}. 

\subsubsection*{Exploration}
To avoid becoming stuck in suboptimal policies, exploration is usually needed during network training. 
The purpose of this process is to gather sufficient experiences, which then are used to infer what actions should be adopted under different states.
In continuous decision-making applications, a typical method is to add Ornstein-Uhlenbeck (OU) random noise to the action generated by the actor \cite{Lillicrap2015ContinuousCW}, i.e.:
\begin{align}
  \pi_{\bth}^{\rm{explore}} (\bm S^t) = \sigma'_{\mc A} \left( \pi_{\bth} (\bm S^t) + \beta^t \Delta^t \right), 
\end{align}
where $\siga'$ is a simple mapping function if the noise-perturbed action violates ${\mc A}$;
 $\Delta^t = [\delta^t_{f,b}] \in \mathbb{R}^{F\times B}$; and each element $\delta_{f,b}^t$ denotes a sample drawn from a continuous OU process \cite{Lillicrap2015ContinuousCW}; and $\beta^t \geq 0$ is a diminishing parameter. 

To this end, an entire implementation of this centralized control is shown in Algorithm \ref{alg:C-HDDPG}. 
\subsection{Fronthaul Communication Complexity} In the proposed centralized caching design, the CP needs to frequently communicate with SBSs during network training and evaluation. Herein, we briefly analyze fronthaul communication complexity of this centralized control, which is described by the total dimension of variables that are transmitted between the CP and SBSs. We consider the worst case, in which each SBS is fully loaded and serves a maximum number of users, e.g.,  $|\mc K_b^t| = K$. 
Specifically, during network training, the CP needs to obtain information about the system state $[\{f_k^t\}, \mc E^t, \bm L^t]$ at each epoch. The dimension of user requests should be $BK$.
Network connectivity $\mc E^t$ can be computed by knowing the coordinates of the active users; by denoting the coordinates as two dimensional vectors, the total dimension of user positions is $2BK$. 
Clearly, the CP has the exact information about cache allocation $\bm L^t$, which is termed as $\bm A^{t-1}$ in its {\it RB}; thus, no fronthaul cost is involved. 
Afterwards, the CP uses the the fronthaul to inform caching decisions $\{\bm A_b^t\}$; the total dimension of the involved variables is given by $BF$. Regarding reward $R^{t+1}$, it can be inferred from state $\bm S^{t+1}$.
Hence, the overall fronthaul communication complexity during network training is $\mc{O} (3BK + BF)$.
When the system runs in an evaluation procedure, the CP again needs to know the system state and inform each SBS of its caching decision. Consequently, the corresponding fronthaul communication complexity is $\mc{O} (3BK + BF)$.

Moreover, the critic and actor are built upon system states and joint actions, i.e., $(\bm S^t, \bm A^t)$, which is in the order of $\mc{O} (B^2)$ of local observations and actions, i.e., $(\bm S^t_b, \bm A_b^t)$. For this reason, the computational complexity would be excessively high as the number of agents increases for a continuous RL problem \cite{lowe2017multi}. 
To address this issue, we now focus on developing efficient decentralized algorithms in following sections.

\begin{algorithm}[!t]
  \caption{Proposed C-HDDPG-based Cooperative Coded Caching}\label{alg:C-HDDPG}
  \begin{algorithmic}[1]
    \State Initialize $ \tau $, $ \alpha_c $, $ \alpha_a $, $ \gamma $,  $ N $, $ I_0 $, $\lambda = \lambda_{\min}$
    \State Initialize parameter $\bphi$ for critic network and $ \bth $ for actor network
    \State Initialize parameters $\bphi^- \leftarrow \bphi$, $\bth^- \leftarrow \bth$ for target critic network and target actor network
    \State Initialize {\it RB} $\Xi$ and mapping function $\siga$
    \State Initialize $\delta^1, \delta^2, \cdots, \delta^I$ 
    \For{$ t=0,1,2, \cdots $}
    \State Input $\bm S^t$ to actor and output $\bm A^t = \pi_{\bth}^{\rm explore} (\bm S^t)$
    \State Take action $\bm A^t $ and observe $\bm S^{t+1} $, $ R^{t+1} $
    \State Calculate $ R^{t+1}_\h $ by \eqref{eq:hreward}
    \State Store $\xi^t =  \left( \bm S^t, \bm A^t, R^{t+1}_\h, \bm S^{t+1} \right) $ into {\it RB}
    \Procedure{TrainHDDPG}{}
	    \State Randomly sample a mini-batch of $N$ experiences from relay buffer as $ \Xi_N $
      \State Update $\bphi, \bth$ by \eqref{eq:up3} and \eqref{eq:up4}, respectively
	    \State Update $\bphi^-$, $\bth^-$ by \eqref{eq:up1} and \eqref{eq:up2}, respectively
      \If{$t == i\times I_0$} $\lambda \leftarrow \lambda + \delta^i$ 
      \EndIf
	\EndProcedure    
    \EndFor 
  \end{algorithmic}
\end{algorithm}

\section{Partially Decentralized HDDPG-based Cooperative Coded Caching}
In this section, to circumvent excessive communication cost and high complexity in the centralized design, we develop a partially decentralized (PD)-HDDPG-based cooperative coded caching design. This scheme operates at the level of PD control, in the sense that a (centralized) critic is used to train (local) actors that separately approximate the caching policy of each SBS.
\begin{algorithm}[!t]
  \caption{Proposed PD-HDDPG-based Cooperative Coded Caching}\label{alg:PD-HDDPGA}
  \begin{algorithmic}[1]
    \State Initialize $ \tau $, $ \alpha_c $, $ \alpha_a $, $ \gamma $,  $ N $, $ I_0$, $ \lambda = \lambda_{\min}$
    \State Initialize parameter $\bphi$ for the critic  and $\bth = \{\bth_b, \forall b \in \mc B\}$ for actors
    \State Initialize parameters $\bphi^- \leftarrow \bphi$, $\bth^- \leftarrow \bth$ for target critic and target actors
    \State Initialize {\it RB} $ \Xi $ and mapping functions $\{\sigma_b, \forall b \in \mc B\}$
    \State Initialize $\delta^1, \delta^2, \cdots, \delta^I$
    \For{$ t=0,1,2, \cdots $}
      \For{$b \in \mc B$}
        \State Observe $\bm S_b^t$ and compute action $\bm A_b^t $ through $\pi_b$ with proper exploration 
      \EndFor
    \State Execute $\{\bm A_b^t, \forall b\}$ in a real environment and observe $\bm S^{t+1} = \{\bm S_1^{t+1}, \cdots, \bm S_B^{t+1}\}  $, $ R^{t+1} $
    \State Let $\mb A^t = \{\bm A_1^t, \cdots, \bm A_b^t\}$ and calculate $R_{\h}^{t+1}$ by \eqref{eq:hreward}
    \State The CP pushes $\xi^t =  \left( \bm S^t, \bm A^t, R_{\h}^{t+1}, \bm S^{t+1} \right) $ into {\it RB}
    \Procedure{TrainHDDPG}{}
    \State Randomly sample a mini-batch of experiences $ \Xi_N $
    \State Calculate $\nabla_{\bphi} Loss(\bphi)$ and $\bphi \leftarrow \bphi - \alpha_c \nabla_{\bphi} Loss(\bphi) $
    \For{ $ b \in \mc B$}
      \State $\bth_b \leftarrow \bth_b + \alpha_a \nabla_{\bth_b} J_{\h}  $
    \EndFor
    \State $\bphi^-  \leftarrow (1-\tau)\bphi^- + \tau \bphi$ 
    , ~$\bth^- \leftarrow (1-\tau)\bth^- + \tau \bth$
    \If{$t == i\times I_0$} $\lambda \leftarrow \lambda + \delta^i$ \EndIf
    \EndProcedure
    \EndFor 
  \end{algorithmic}
\end{algorithm}
\subsection{Partially Decentralized Multi-Agent HDDPG}
In a PD multi-agent framework, each agent maintains an actor and mapping function to produce its actions. To augment collaboration among multiple agents, these actors are trained with the aid of a (centralized) critic. 
Specifically, agent $b$ has an actor $\mu_b$ (parametrized by $\bth_b$) and mapping function\footnote{For simplicity of notation, $\mu_b$ and $\sigma_b$ are abbreviations of $\mu_{\bth_b}$ and $\sigma_{\mc A_b}$, respectively.} $\sigma_b$, which are able to map a (local) proto-action $\mu_b(\bm S_b)$ into the corresponding action space $\mc A_b$. 
Accordingly, the policy function for agent $b$ can be expressed by $\pi_b = \sigma_b[\mu_b(\cdot)]$.

On the basis of homotopy optimization, all agents cooperatively seek polices to jointly maximize the following homotopy function:
\begin{align}
  J_\h(\bth_1, \cdots, \bth_B|\lambda, \siga) = \mathbb{E} \left[\sum_{t = 0}^{\infty} (\gamma)^{t} R_\h^{t + 1} \big| \mu_{\bth}, \siga \right],
\end{align}
where we define $\mu_{\bth} \triangleq \{\mu_1, \cdots, \mu_B\}$ and $\siga \triangleq \{\sigma_1, \cdots, \sigma_B\}$; and the homotopy reward $R_\h^{t}$ can be given by \eqref{eq:hreward}. Next, a (centralized) critic $Q_{\h, \bphi}(\bm S^t, \bm A_1^t, \bm A_2^t, \cdots, \bm A_b^t) $ is leveraged to estimate 
$
  \mathbb{E} \left[ \sum_{\tau = 0}^{\infty} (\gamma)^{t+\tau} R_\h^{t + \tau +1} \big| \bm S^t, \bm A_b^t, \mu_b, \sigma_b, \forall b \in \mc B\right].
$
Similar to C-HDDPG, parameter $\bphi$ can be learned by minimizing the following loss function:
\begin{align}
  Loss\left(\bphi\right) = \mathbb{E}_{ \xi_t \sim \Xi_N} \left[ \left(y_{\bphi^-}^t  - Q_{\h, \phi}(\bm S^{t}, \bm A_1^t, \cdots, \bm A_b^t) \right)^2\right], \label{eq:loss2}
\end{align}
where $y_{\bphi^-}^t$ denotes the target value, i.e., 
$
  R_\h^{t+1} + \lambda Q_{\h, \bphi^-}\left(\bm S^{t+1}, \bm A_1, \cdots, \bm A_b\right) \big|_{\bm A_b = \sigma_b[\mu_b^-(\bm S_b^{t+1})]}, \label{eq:target2}
$
and $\bth_b^-, \bphi^-$ are parameters of target actor $\mu_b^-$ and target critic $Q_{\h, \bphi^-}$, respectively. Furthermore, the gradient for training parameter $\bth_b$ can be approximated by:
 \begin{align}
  \nabla_{\bth_b} J_\h \approx \mathbb{E}_{ \xi_t \sim \Xi_N} \left[\nabla_{\bm A_b} Q_{\h,\bphi}(\bm S^{t}, \bm A_1, \cdots, \bm A_b)|_{\bm A_b = \pi_b (\bm S_b^t)} \nabla_{\bth_b} \pi_{b}(\bm S_b^t)\right]. \label{eq:jgrad2}
\end{align}
Similarly, homotopy variable $\lambda$ should be updated in accordance with \eqref{eq:lbd1}.

\subsection{Implementation}
As depicted in Fig. \ref{fig:stru} (b), we propose a PD-HDDPG-based cooperative coded caching design. Specifically, the CP maintains a centralized (critic), while each SBS has a local actor and mapping function. In addition, the actor and mapping function are designed in the same manner as the centralized scheme to ensure that their outputs are feasible to \eqref{eq:action_space}. Particularly, $\sigma_b (\cdot) = \min \{\bm 1_{F}, \cdot\}, \forall b \in \mc B$.
During the training procedure, the critic and (local) actors should be learned in the CP. We again adopt the techniques of exploration and {\it RB}, and the entire procedure is similar to what we have presented in Algorithm \ref{alg:C-HDDPG}. The detailed implementation is shown in Algorithm \ref{alg:PD-HDDPGA}. Notably, after all actors are fine-tuned, the CP needs to send actor parameters (e.g., $\bth_b, \forall b$) to SBSs, which thereafter can locally compute actions.


\subsection{Fronthaul Communication Complexity} 
During the training procedure, fronthaul communication complexity is the same as that of C-HDDPG, i.e., $\mc{O} (3BK + BF)$. When the system runs in an evaluation procedure, each SBS computes its action locally; obviously,  no fronthaul communication is incurred when observing content requests and positions of local users. 

\section{Fully Decentralized HDDPG-based Cooperative Coded Caching}
To further reduce complexity and fronthaul signaling, we propose a fully decentralized (FD) control for cooperative coded caching. Particularly, each SBS serves as an independent learner to locally train its caching policy. Hereunder, we first present the FD-HDDPG-based caching design, and then briefly summarize the complexity of all of the proposed designs.

\begin{algorithm}[!t]
  \caption{Proposed FD-HDDPG-based Cooperative Coded Caching}\label{alg:FD-HDDPGA}
  \begin{algorithmic}[1]
    \State Initialize $ \tau $, $ \alpha_c $, $ \alpha_a $, $ \gamma $, $I_0$, $\lambda = \lambda_{\min}$
    \State Initialize parameters $\bphi = \{ \bphi_b, \forall b \in \mc B\} $ for critics and $\bth = \{\bth_b, \forall b \in \mc B\}$ for actors
    \State Initialize parameters $\bphi^- \leftarrow \bphi$, $\bth^- \leftarrow \bth$ for target critics and target actors
    \State Initialize $ \Xi_b, N_b, \forall b \in \mc B$ and $\{\sigma_b, \forall b \in \mc B\}$
    \State Initialize $ \delta^1, \cdots, \delta^I$
    \For{$ t=0,1,2, \cdots $}
      \For{$b \in \mc B$}
        \State Execute $\bm A_b^t = \pi_b(\bm  S_b^t)$ with proper exploration and calculate $g_b(\bm  S_b^t) \leftarrow L - \| \bm A_b^t\|_1$
      \EndFor
      \State Observe $\bm  S_b^{t+1}, \forall b \in \mc B$, and $R^{t+1}$ 
      \State Calculate $ R_\h^{t+1}$ by \eqref{eq:hreward3}
      \For{$b \in \mc B$}
        \State SBS $b$ stores $\xi^t =  \left( \bm  S_b^t, \bm A_b^t, R_\h^{t+1}, \bm  S_b^{t+1} \right) $ into $\Xi_b$
      \EndFor
    \Procedure{TrainHDDPG}{}
    \For{$b \in \mc B$}
      \State Randomly sample a mini-batch of experiences  $ \Xi_{b,N} $
      \State $\bphi_b \leftarrow \bphi_b - \alpha_c \nabla_{\bphi_b} Loss(\bphi_b) $
      \State $\bth_b \leftarrow \bth_b + \alpha_a \nabla_{\bth_b} J_\h$
      \State $\bphi_b^-  \leftarrow (1-\tau)\bphi_b^- + \tau \bphi_b$,  $\bth_b^- \leftarrow (1-\tau)\bth_b^- + \tau \bth_b$
          \If{$t == i\times I_0$} $\lambda \leftarrow \lambda + \delta^i$ \EndIf
      \EndFor
    \EndProcedure
    \EndFor 
  \end{algorithmic}
\end{algorithm}
\subsection{Fully Decentralized Cooperative Coded Caching Design}
As shown in Fig \ref{fig:stru}(c), each SBS has a set of critic, actor, and mapping functions. These basic elements are designed in the same manner as that of C-HDDPG, but built upon local observations.  
More precisely, with an actor $\mu_b(\cdot)$ and mapping function $\sigma_b$, SBS $b$ can obtain a feasible action by mapping a proto-action $\mu_b(\bm S_b)$ into the action space $\mc A_b$, i.e., $\sigma_b[\mu_b(\bm S_b)]$. SBSs are encouraged to cooperate with each other and receive a common reward $R^t$ from the environment as the performance criterion to evaluate their policies. On the basis of homotopy optimization, a (local) critic $Q_{\h, \bphi_b}(\bm  S_b^t, \bm A_b^t)$ is designed for SBS $b$ to estimate $\mathbb{E} \left[ \sum_{\tau = 0}^{+\infty} (\gamma)^{t+\tau} R_\h^{t+\tau+1}|\bm S_b^t, \bm A_b^t, \mu_b, \sigma_b\right],$
where $R_\h^t$ is defined as:
\begin{align}
  R_\h^t \triangleq R^{t} + \lambda \sum_{b \in \mc B} g_{b} (\mb  S_b^{t-1}|\sigma_b), \label{eq:hreward3}
\end{align}
and $g_{b} (\mb  S_b^{t-1}|\sigma_b) \triangleq L - \| \sigma_b(\mu_b(\bm  S_b^{t-1})) \|_1, \forall b$. Subsequently, each agent is envisioned to independently train its critic and actor. In addition,
the training procedure should follow the same workflow  as C-HDDPG, which is presented in Algorithm \ref{alg:FD-HDDPGA} in greater detail.
\subsection{Fronthaul Communication Complexity} 
During network training, although fronthaul communications are not necessary for SBSs to obtain local observations, each SBS still needs to know the homotopy reward. SBSs first locally computes $\{g_b(\bm  S_b^t|\sigma_b)\}$, which are then aggregated by the CP and subsequently sent back to each SBS. Therefore, fronthaul communication complexity  during training is given by $\mc{O}\left(2B\right)$. For network evaluation, SBSs can directly calculate local actions according to their observations; thus, no fronthaul communication is needed. 
\begin{table}[h!]
\centering
\caption{Fronthaul communication complexity}
\begin{tabular}{c c c c c}
  \hline
        & C-HDDPG & PD-HDDPG & FD-HDDPG \\
    \hline
      Training &  $\mc{O} (3BK + BF)$ &  $\mc{O} (3BK + BF)$  & $\mc{O}\left(2B\right)$ \\
    \hline
      Evaluation & $\mc{O} (3BK + BF)$  & 0 &  0 \\
  \hline
\end{tabular}
\label{table:fso}
\end{table}
\begin{Remark}
  As a comparison, we summarize fronthaul communication complexity of all algorithms in Table \ref{table:fso}.  It can be observed that C-HDDPG incurs the highest fronthaul communication complexity in either training or evaluation procedure since it manipulates system operation in the centralized control. PD-HDDPG requires the same order of signaling as that of C-HDDPG during training, which thereafter operates in a decentralized manner during network evaluation; thus, fronthaul communication complexity during the evaluation procedure is as low as that of FD-HDDPG. Indeed, FD-DDPG has the lowest fronthaul communication complexity during two procedures, which might compromise performance. Therefore, by developing different controls of caching design, the proposed framework is envisioned to possess advantages of superior performance, as well as scalability to large-scale systems.
\end{Remark}

\section{Performance Evaluations}
In this section, we present performance evaluations of the proposed DRL algorithms for cooperative coded caching under different scenarios. Specifically, we first provide simulation setup and then compare the proposed algorithms with baselines. Subsequently, we investigate the impacts of system parameters on the proposed algorithms.
\subsection{Simulation Setup}
Unless stated otherwise, we consider the following default settings: a SCN covers a square area of $[0,1]$ km $\times$ $[0,1]$ km; four SBSs are uniformly deployed in the region, each of which has a communication radius of $r_0 = 500$ m and can provide service for a maximum of 100 mobile users during each epoch; and mobile users are randomly distributed by following PPP with density $9.5\times 10^{-5}$ during each epoch. 
Moreover, user preferences towards content are considered to have multiple patterns, i.e., each preference pattern follows a Zipf distribution, i.e., $p_{f} = \zeta_{f}^{-\kappa}/\sum_{f' \in \mc F} \zeta_{f'}^{-\kappa}$, where $\zeta_{f}$ denotes the popularity rank of content item $f$, which is temporally and randomly evolving as time passes; $\kappa$ denotes a skewness factor and randomly takes a value from $\{0.5, 1, 1.5, 2\}$; 
a catalog of 20 content items are encoded by MDS codes; and each SBS has a {\it fractional caching capacity} $L/ F = 0.2$, which indicates that each SBS can fetch $20\%$ of the total content. 

\subsection{Convergence Behavior}
To analyze the proposed caching framework, we consider the following baselines:
\begin{itemize}
  \item {\bf Centralized Optimization-Cache Updating (CO-CU):} This is a centralized optimization-based design, which is performed in the CP \cite{liao2017coding,wu2020jointLongTermCacheUpdating}. Specifically, one can first estimate the probability of each content item that could be requested by users under the coverage of SBS $b$, i.e., $p_{f,b} = N_{f,b}/\sum_{f' \in \mc F}N_{f',b}, \forall f$, where $N_{f,b}$ denotes how many requests of content item $f$ that SBS $b$ receives at the current epoch. Then, similar to the cooperative caching problem formulated in \cite{liao2017coding}, the joint caching decision (e.g. $\bm A^t$) can be optimized by minimizing the expected fronthaul traffic loads together with cache updating cost. 

  \item {\bf Local Optimization-Cache Updating (LO-CU):} This scheme works at a level of decentralized control. Instead of optimizing joint action $\bm A^t$ via CP in CO-CU, each SBS separately calculates its caching decision (e.g, $\bm A_b^t$) by minimizing the expected fronthaul loads corresponding to local user requests from its communication range. 
  \item {\bf Random Cache Updating (RCU):} At each epoch, every SBS randomly updates its caching resource until it reaches storage limits.  
  \item {\bf Plain DDPG-Based Schemes:} To assess the effectiveness of the proposed DRL, we consider to implement plain DDPG in different levels of controls, i.e., {\bf C-DDPG, PD-DDPG and FD-DDPG}, each of which follows a similar idea to the proposed caching approach. 
\end{itemize}

To implement the proposed algorithms, each critic is designed as follows: there are three hidden layers, each of which contains 512 neurons. Each actor consists of three hidden layers with 256, 128, and 64 neurons, respectively. All of the networks are trained by the {\it Adam} optimizer with a {\it polynomial learning rate} (e.g., readers are referred to \cite{mishra2019polynomial} for additional details), where we set initial learning rates for actors and critics as 0.01 and 0.001, respectively, and the power factor for decay as 0.9.
A mini-batch of 100 experiences are randomly sampled every time from {\it RB} that is capable of storing 5000 past experiences. 
Every target critic or target actor is updated by a step size $\tau = 0.001$, and the discount factor $\gamma = 0.99$. To perform policy explorations, we use an OU process with mean 0 and variance 1; the associated diminishing parameter $\beta$ is initialized as 0.9, and decreased at a rate of 0.995 every epoch until it reaches 0.0001. Finally, we initialize the following sequence to update the homotopy variable $\lambda$, i.e., $\{\delta^i = -0.1\times \lambda_{\min}, i = 1, 2, \cdots, 10\}$, where $\lambda_{\min} = -0.005$ and we update $\lambda$ every $I_0 = 1000$ epochs. 

As shown in Fig. \ref{Fig:Convergence-Cen-DDPG}, we first illustrate the learning curves of the proposed algorithm under centralized control. Particularly, we vary parameter $\lambda_{\min}$ to investigate the impacts of penalty. Each result is averaged over $N = 5000$ epochs, i.e., $\sum_{\tau=t-N+1}^{t} R^{\tau}/N$. 
It can be observed that in the first $10^4$ epochs, the curves of HDDPG-based algorithms rise markedly, and notable gaps can be observed between plain DDPG and HDDPG-based algorithms. In subsequent epochs, the learning curves of these DRL algorithms increase gradually until convergence. Clearly, when $\lambda_{\min} $ is -0.005 or -0.015, HDDPG-based algorithms achieve higher caching rewards than those of plain DDPG; when $\lambda_{\min}$ goes down to -0.1, the curve increases fairly slowly and converges to a level that is very close to plain DDPG. Therefore, if $\lambda_{\min}G$ is significantly large compared with the objective, it could dominant the actual objective, eventually leading to a suboptimal policy. These observations demonstrate that properly introducing the penalty term to RL (e.g., $\lambda_{\min}G$ in \eqref{prob:hrl}) could assist agents to infer better actions and speed-up convergence behavior. Furthermore, we propose to implement HDDPG (with $\lambda_{\min} = -0.005$) by initializing {\it RB} with 10\% warm-up experiences via optimization baselines (e.g., CU-CO) rather than the conventional exploration method used in plain DDPG, i.e., only utilizing OU random noise to explore action spaces. 
As can be seen, with few warm-up experiences, the proposed implementation can further improve performance compared with the conventional exploration under the same $\lambda_{\min} = -0.005$. This result implies that taking advantage of a good baseline improves the efficiency of exploration in DRL, yet at the cost of additional computational complexity. 

Fig. \ref{Fig:Convergence-MA-DDPG} and \ref{Fig:Convergence-FD-DDPG} show the learning curves of the proposed algorithms under partially and fully decentralized scenarios. As anticipated, the proposed PD-HDDPG and FD-HDDPG respectively outperform plain DDPG in both scenarios. It is worthing noting that, the shaded region around each learning curve shows reward deviations, which measure the robustness of each policy. Obviously, the DRL-based designs exhibit more centered results while the rewards, achieved by the optimization baseline, spread out over quite a broad range. This observation demonstrates the effectiveness of HDDPG to track and adapt to dynamic features of wireless networks.  

\begin{figure}
\begin{minipage}[t]{.495\linewidth}
  \centering
  \includegraphics[scale=0.38]{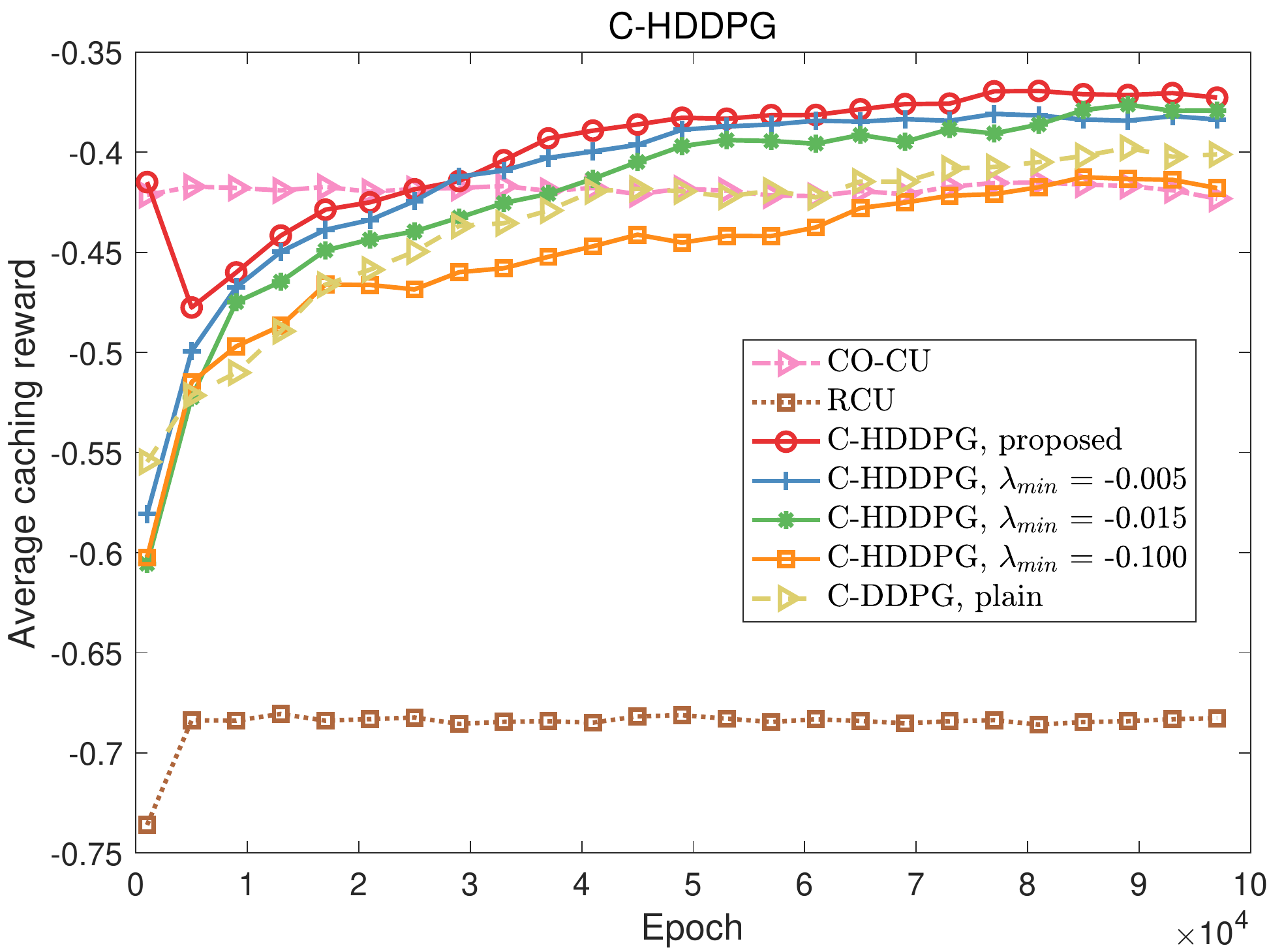}
  \caption{Learning curves of C-HDDPG. }
  \label{Fig:Convergence-Cen-DDPG}
\end{minipage}
\begin{minipage}[t]{.495\linewidth}
  \centering
  \includegraphics[scale=0.38]{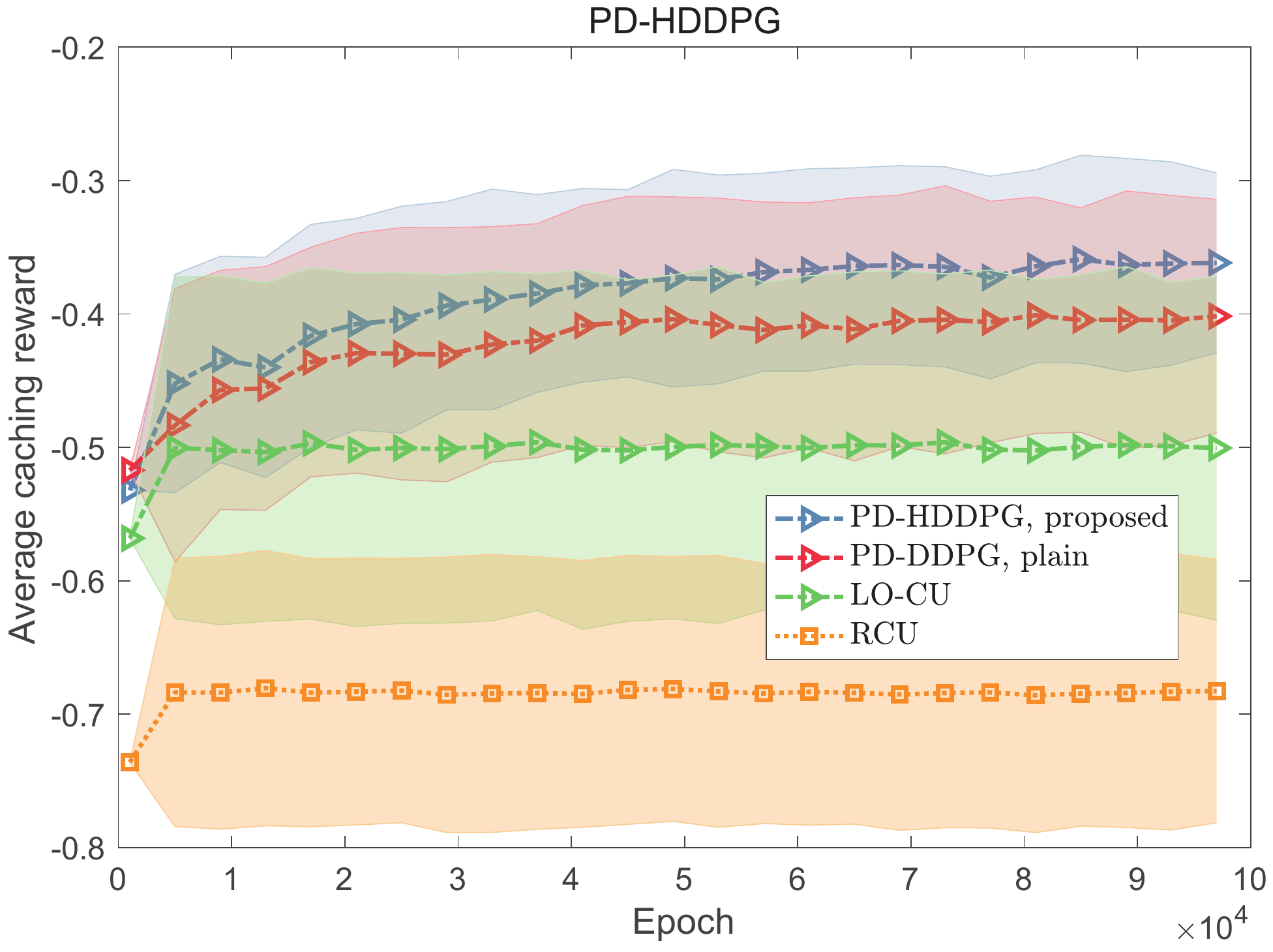}
  \caption{Learning curves of PD-HDDPG. }
  \label{Fig:Convergence-MA-DDPG}
\end{minipage}
\end{figure}

\subsection{Impacts of System Parameters}
In this subsection, we study the impacts of system parameters on the proposed caching framework. All of the results are obtained by averaging over $10^4$ epochs.
We first investigate the impacts of caching capacity under a larger content catalog size (e.g., $F = 50$). 
Clearly, as shown in Fig. \ref{Fig:FCC}, C-HDDPG is always superior to other algorithms. When fractional caching capacity (e.g., $L/ F$) is $10\%$, C-HDDPG achieves the lowest fronthaul traffic loads, e.g., 0.47, in contrast with PD-HDDPG and FD-HDDPG, e.g., 0.52. The superiority of C-HDDPG demonstrates the effectiveness of using global information to enhance SBS collaboration. As fractional caching capacity grows larger, the gap between PD-HDDPG and FD-HDDPG becomes bigger. Indeed, with the aid of a centralized critic to train local policies, PD-HDDPG can allow SBSs to tightly collaborate in comparison to the fully decentralized scheme.
Although FD-HDDPG depends on local observations only, it still outperforms CO-CU and LO-CU by 6.11\% and 11.51\% respectively, under the scenarios being studied. This observation demonstrates the remarkable advantages of using DRL algorithms to learn policies under dynamic environments over conventional optimization-based algorithms. 

\begin{figure}
\begin{minipage}[t]{.495\linewidth}
  \centering
  \includegraphics[scale=0.38]{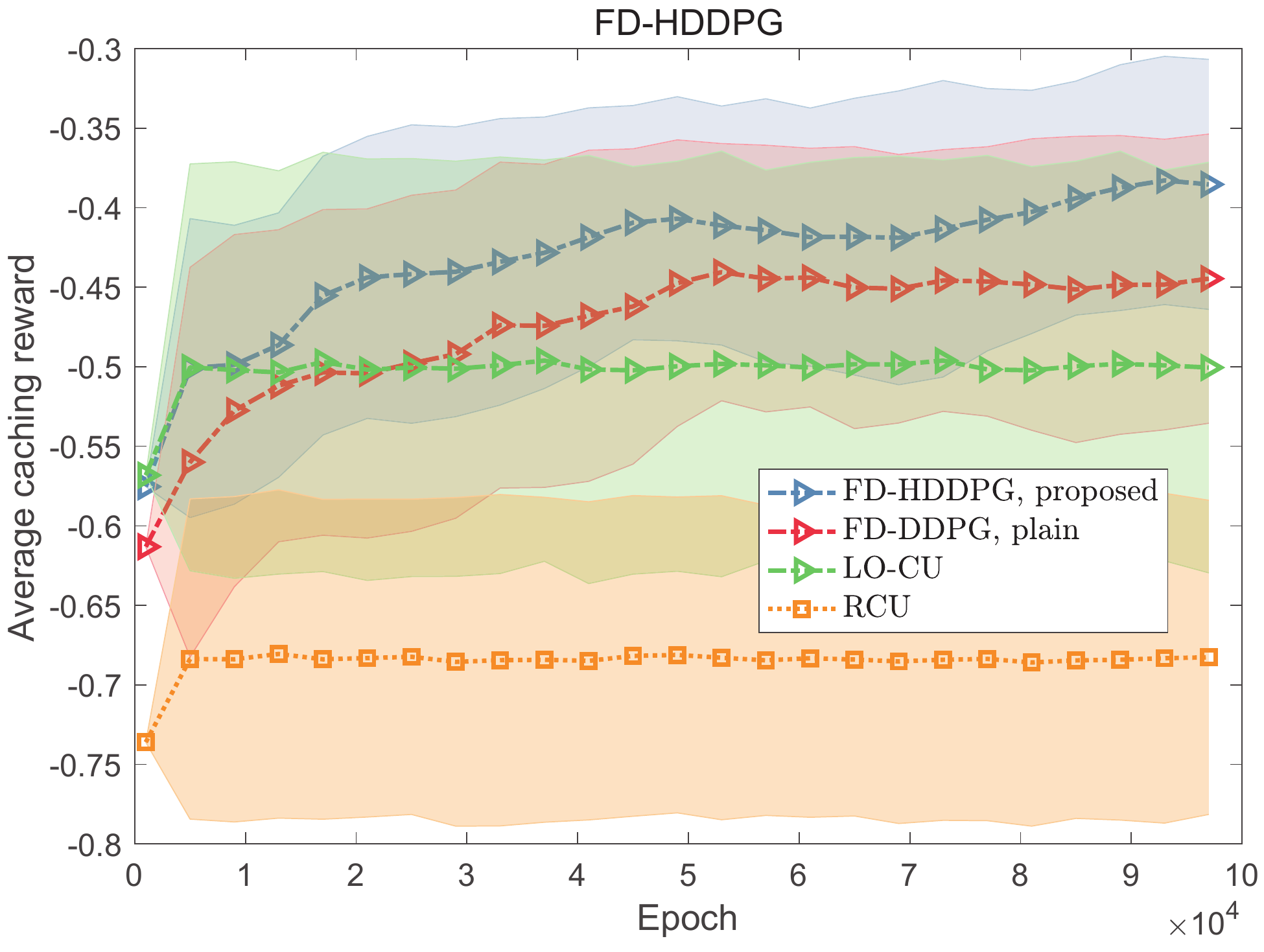}
  \caption{Learning curves of FD-HDDPG. }
  \label{Fig:Convergence-FD-DDPG}
\end{minipage}
\begin{minipage}[t]{.495\linewidth}
  \centering
  \includegraphics[scale=0.5]{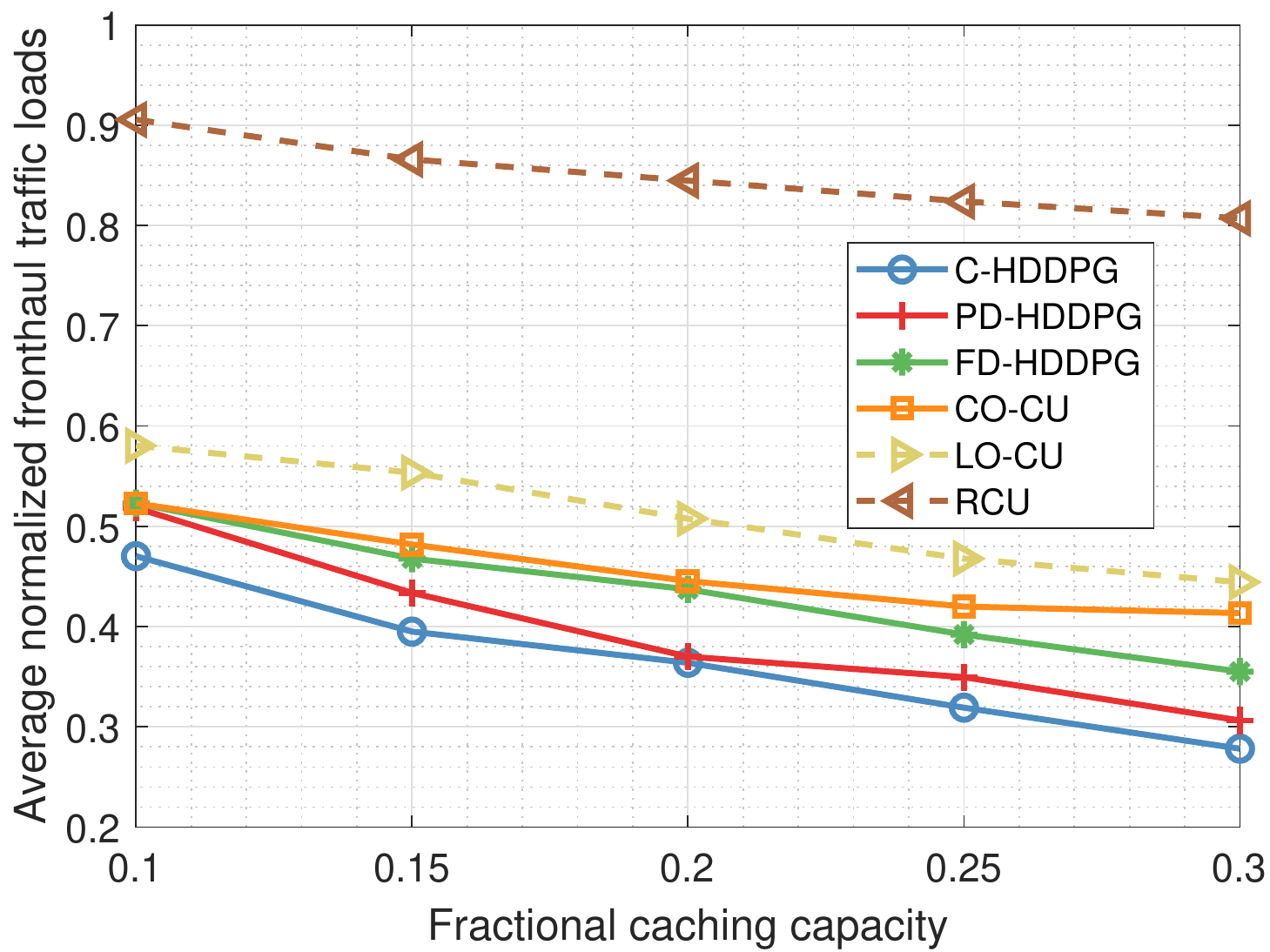}
  \caption{Impacts of fractional caching capacity. }
  \label{Fig:FCC}
\end{minipage}
\end{figure}

Hereunder, we conduct experiments to investigate the impacts of content popularity by varying the skewness factor of Zipf distribution. Moreover, under each scenario being investigated, the corresponding skewness factor is fixed as a constant, in which a larger skewness factor indicates a more concentrated content popularity. 
As can be seen, fronthaul traffic loads decrease as the skewness factor becomes larger for all of the algorithms except for RCU. The reason for this is that user requests are more likely to be accessed in local SBSs if their preferences are more centered. Furthermore, PD-HDDPG achieves comparable fronthaul traffic loads in contrast with C-HDDPG when the skewness factor is smaller than 1; after that, using centralized control only produces a marginal performance gain over decentralized control, yet with a significant implementation cost. This finding demonstrates that PD-HDDPG can efficiently obtain a satisfactory trade-off between complexity and performance.  


To investigate the scalability of the proposed algorithms, we carry out experiments by varying the content catalog size. As depicted in Fig. \ref{Fig:ccs}, PD-HDDPG obtains a comparable performance to C-HDDPG when the content catalog size is smaller than 50; as more content items are considered, C-HDDPG achieves better performance due to utilization of global information.
It is worth noting that, vast gaps can be observed between the proposed algorithms and baselines under either centralized or decentralized scenarios. More specifically, over the entire horizontal axis, C-HDDPG and PD-HDDPG can decrease fronthaul traffic loads by 10.54\% and 7.68\% respectively in comparison to CO-CU; whereas FD-HDDPG can reduce fronthaul traffic loads by 10.40\% compared with LO-CU. All of these results corroborate the scalability of the proposed algorithms. 
Notably, the curve of RCU increases greatly and eventually surpasses 1 as the content catalog size grows larger. This is because the cache updating cost introduced by RCU could overtake fronthaul traffic loads arising from satisfying user requests under large scenarios.
\begin{figure}
\begin{minipage}[t]{.32\linewidth}
  \centering
  \includegraphics[scale=0.33]{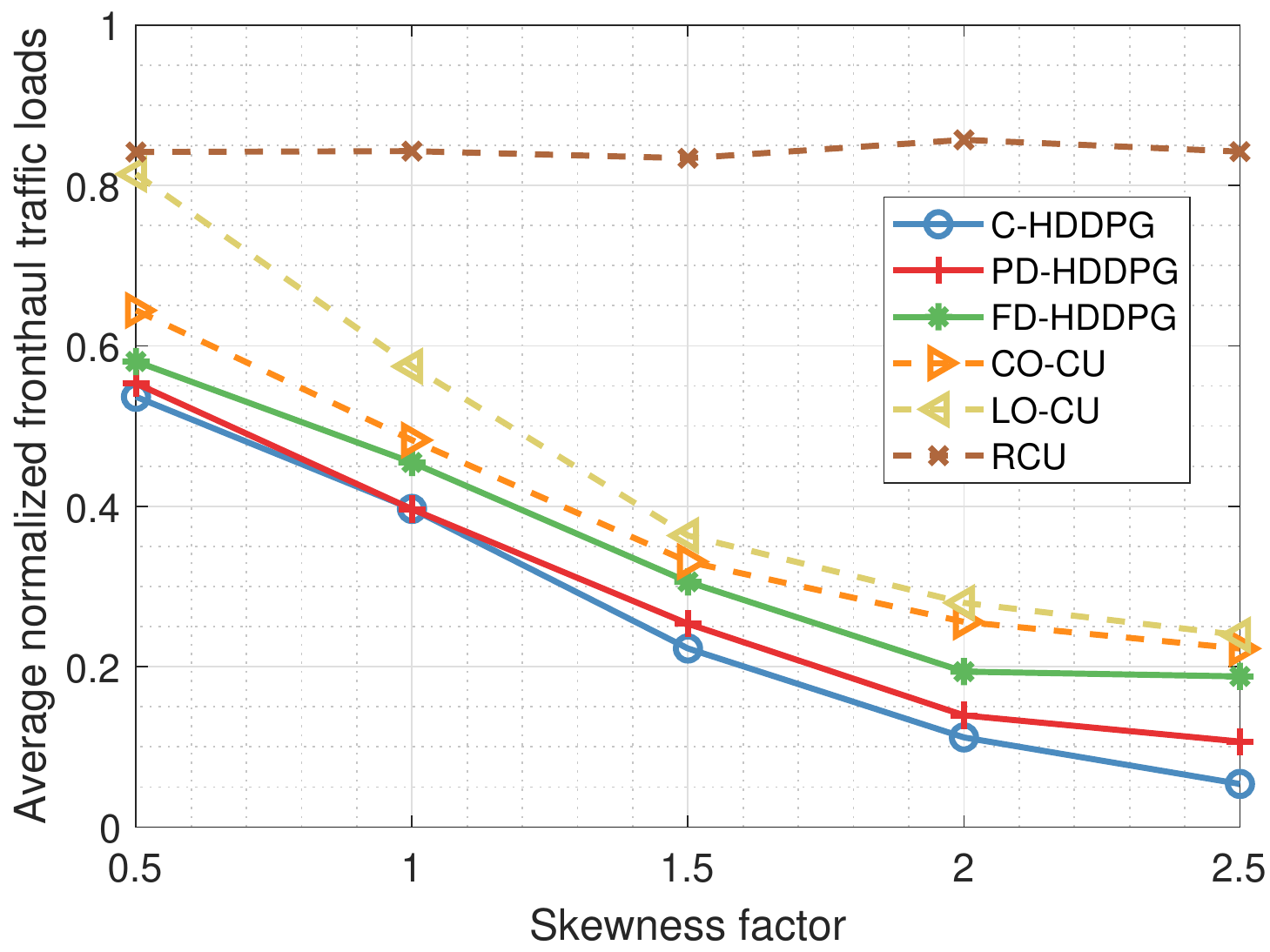}
  \caption{Impacts of content popularity. }
  \label{Fig:SF}
\end{minipage}
\begin{minipage}[t]{.32\linewidth}
  \centering
  \includegraphics[scale=0.33]{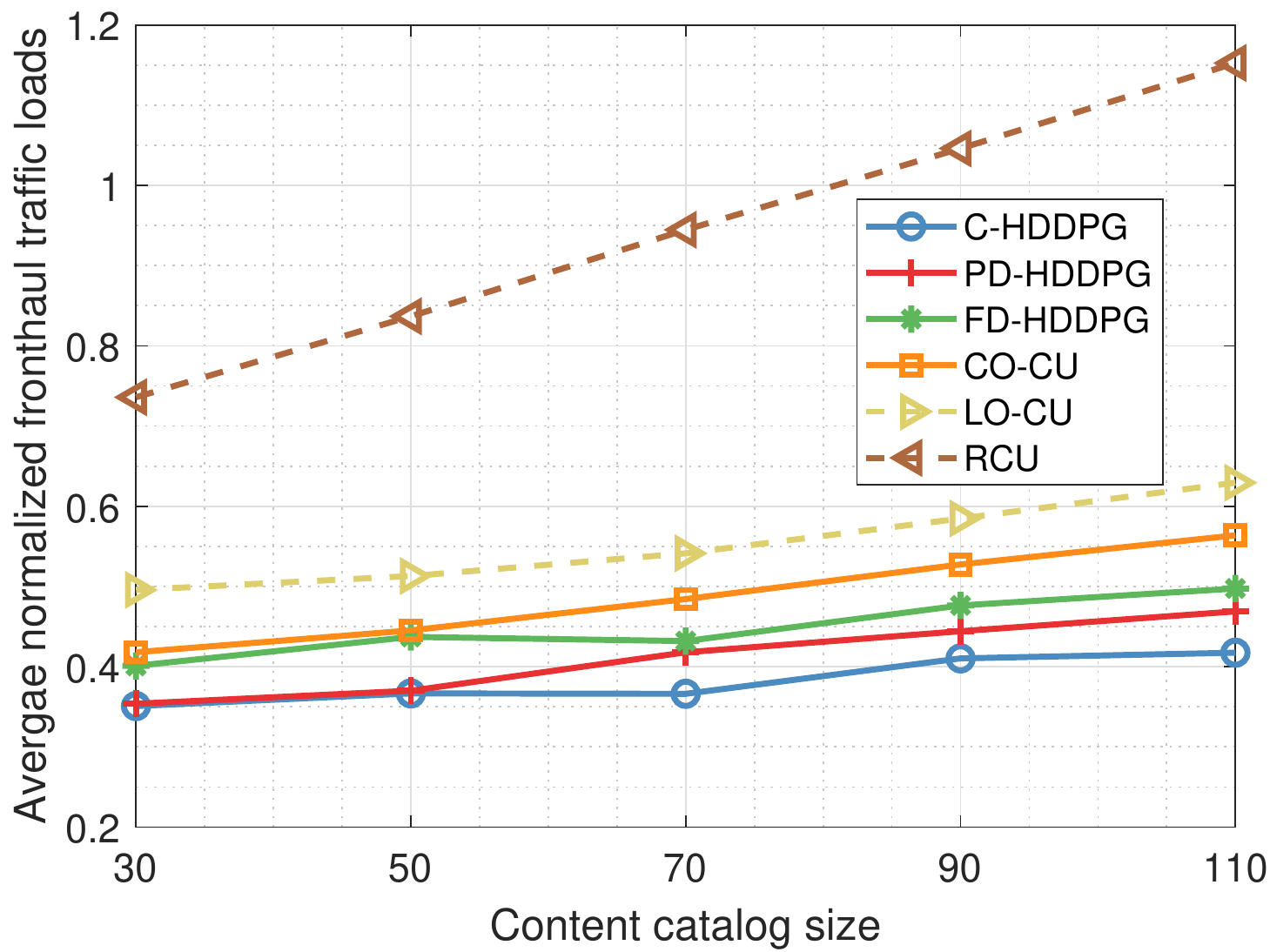}
  \caption{Impacts of content catalog size.}
  \label{Fig:ccs}
\end{minipage}
\begin{minipage}[t]{.32\linewidth}
  \centering
  \includegraphics[scale=0.33]{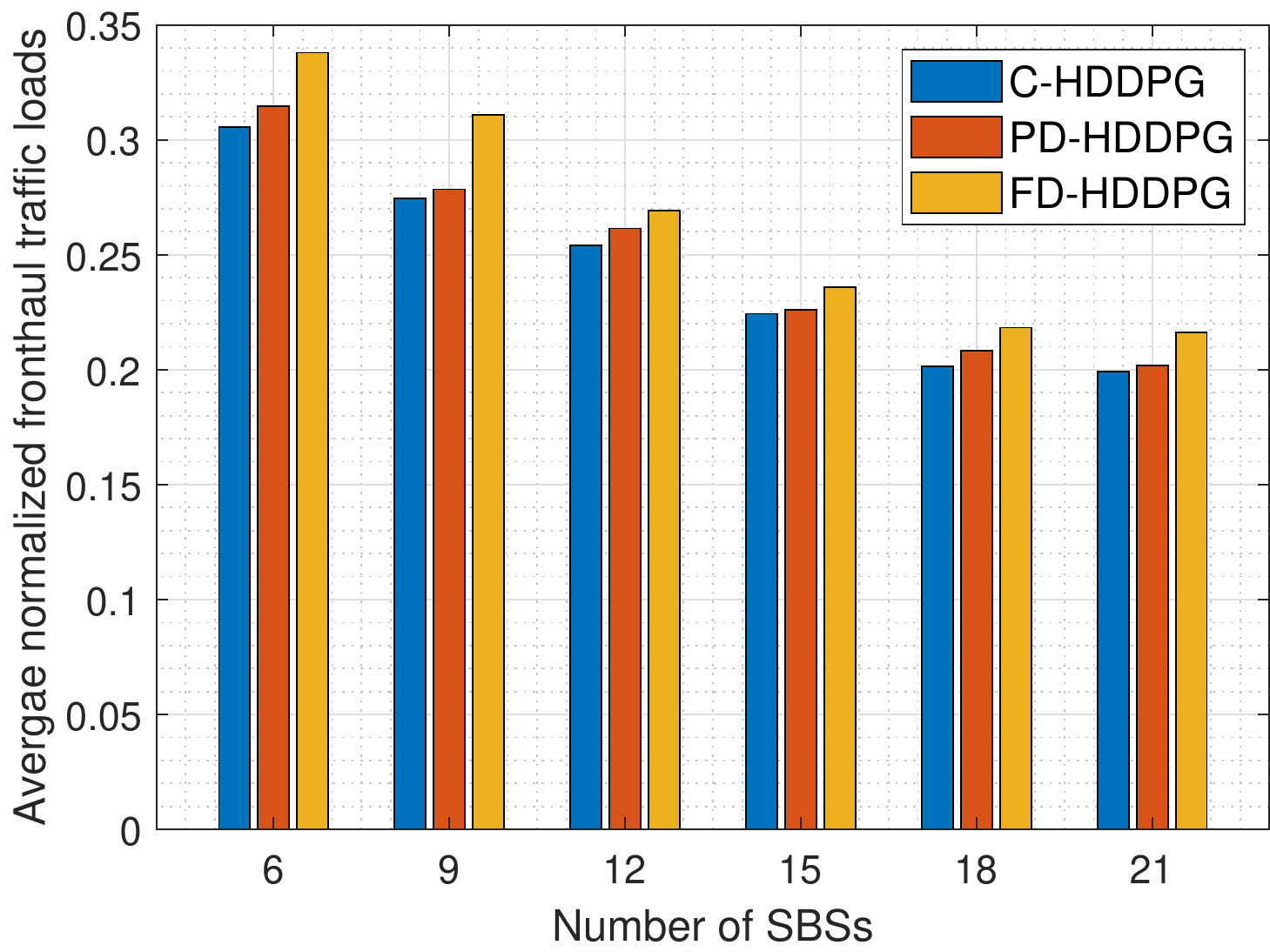}
  \caption{Impacts of number of SBSs.}
  \label{Fig:B}
  \end{minipage}
\end{figure}

We further investigate how the number of agents (i.e., SBSs) impacts the proposed multi-agent algorithms. In these settings, we vary the number of SBSs, and set the distance between two adjacent SBSs as 300 m. As shown in Fig \ref{Fig:B}, traffic loads exhibit a decreasing trend as more SBSs are available to participate in cooperative coded caching. In addition, all of the curves decrease relatively slowly when more than 15 SBSs are deployed, which implies that most users might already be able to access a maximum number of local SBSs, that is usually limited by communication coverage. More importantly, when the number of agents becomes large, PD-HDDPG always achieves comparable results to C-HDDPG  with significant reductions of signaling overhead and complexity. Concerning FD-HDDPG, it achieves slightly larger traffic loads than those of C-HDDPG and PD-HDDPG by at most 3.23\% and 2.34\%, respectively. These results demonstrate the potentials of utilizing decentralized controls as a large number of SBSs are deployed. 
\section{Conclusion}
 We have proposed a deep multi-agent reinforcement learning framework for dynamic cooperative coded caching at small cell networks. Particularly, we have developed a novel deep reinforcement learning algorithm, i.e., homotopy DDPG, to address the challenges arising from the resultant continuous decision-making with constraints. From an engineering perspective, we have proposed centralized, partially decentralized, and fully decentralized controls to balance complexity and performance.  
Simulation results have confirmed that the proposed DRL outperforms plain DDPG under different levels of controls; and the proposed decentralized designs also achieve satisfactory performance compared with the centralized design. 

\appendix
\subsection{\textit{Proof of Proposition \ref{prop:opticon}}}   
\label{appen:A}
Consider an optimal decision sequence $\{\bm A^t\}$, which results in an optimal value $J^* = \sum_{t} (\gamma)^tR^{t+1}$. Suppose that there exists $t', b'$ such that $\sum_{f} a_{f,b'}^{t'} = L'$ and $L' < L$; in addition, the corresponding decision $\bm A^{t'}$ is anticipated to impact rewards $R^{t'+1}$ and $R^{t'+2}$. 
To proceed, we first denote $\mc F' = \{f| a_{f,b'}^{t'} - l_{f,b'}^{t'} > 0\}$. Accordingly, one can create a sequence $\{o_f, \forall f \in \mc F\}$ where $0 \leq o_{f} \leq l_{f,b'}^{t'} - a_{f,b'}^{t'}, \forall f \in \mc F \backslash \mc F',$ and $o_f = 0, \forall f\in \mc F',$ such that $\sum_{f\in \mc F} o_f = L - L'$. Then, we consider a new decision $\underline {\bm A}^{t'} = [\underline a_{f,b}^{t'}]$ for epoch $t'$, where $\underline a_{f,b}^{t'} = a_{f,b}^{t'}$ for $\forall b \neq b'$, and $\underline a_{f,b'}^{t'} = a_{f,b'}^{t'} + o_f$; clearly, this gives rise to $\sum_{f \in \mc F} \underline a_{f,b'}^{t'} = L$. By checking the traffic loads in \eqref{eq:cost}, one can verify that $\underline R^{t'+1} \geq R^{t'+1}$ and $\underline R^{t'+2} \geq R^{t'+2}$. We thereby claim that $\{\underline {\bm A}^t\}$ is also an optimal decision sequence where $\underline {\bm A}^t = {\bm A}^t$ for $\forall t \neq t'$. Hence, Proposition \ref{prop:opticon} holds. 

\subsection{\textit{Proof of Lemma \ref{prop:homgrad}}} 
\label{appen:B}
This proof follows similar procedures to the {\it Deterministic Policy Gradient Theorem} in \cite{silver2014deterministic}. Accordingly, $\nabla J_{\h, \bth} = \nabla_{\bth} \int_{\mc S} p^0 (\bm S) \hq(\bm S, \bm A)|_{\bm A = \siga[\mu_{\bth} (\bm S)]} d\bm S$, where $p^0(\cdot)$ denotes probability density of state at epoch 0.
 Then, following the standard steps in \cite{silver2014deterministic} yields the following 
\begin{align}
 \hspace{-0.5pc} & \nabla_{\bth} \hq (\bm S, \bm A)|_{\bm A = \siga[\mu_{\bth}(\bm S)]} = 
 \nabla_{\bm A} \hq(\bm S, \bm A)|_{\bm A = \siga[\mu_{\bth}(\bm S)]} \nabla_{\bm A'} \siga (\bm A')|_{\bm A' = \mu_{\bth}(\bm S)} \nabla_{\bth} \mu_{\bth} (\bm S) 
  \notag\\ & \qquad \qquad \qquad \qquad \qquad \qquad  \quad ~~~+ 
   \int_{\mc S} \gamma p(\bm S \rightarrow \hat {\bm S}, 1) \nabla_{\bth} \hq (\hat {\bm S}, \hat {\bm A})|_{\hat{\bm A} = \siga[\mu_{\bth}(\hat {\bm S})]} d \hat{\bm S}, \label{eq: proof21}
\end{align}
where $p(\bm S \rightarrow \hat {\bm S}, t_0)$ denotes the probability density of state $\bm S$ transiting to state $\hat{\bm S}$ after $t_0$ epochs; thereafter, one can continue to unfold \eqref{eq: proof21}, resulting in 
\begin{align}
 \hspace{-0.5pc} & \nabla_{\bth} \hq (\bm S, \bm A)|_{\bm A = \siga[\mu_{\bth}(\bm S)]} = \int_{\mc S} \sum_{t = 0}^{\infty} \gamma^{(t)}p (\bm S \rightarrow \hat{\bm S}, t) \nabla_{\bm A} \hq(\hat {\bm S}, \hat{\bm A)}|_{\hat{\bm A} = \siga[\mu_{\bth}(\hat {\bm S})]} \notag\\& ~~~~ \qquad \qquad \qquad \qquad \qquad \qquad  \quad  \cdot \nabla_{\bm A'} \siga (\bm A')|_{\bm A' = \mu_{\bth}(\hat{\bm S})} \nabla_{\bth} \mu_{\bth} (\hat{\bm S}) d \hat{\bm S}.  \label{eq: proof22}
\end{align}
Now, by using \eqref{eq: proof22}, we can obtain 
\begin{align}
  \nabla J_{\h, \bth} = \int_{\mc S} \rho (\bm S)  \nabla_{\bm A} \hq(\bm S, \bm A)|_{\bm A = \siga[\mu_{\bth}({\bm S})]} \nabla_{\bm A'} \siga (\bm A')|_{\bm A' = \mu_{\bth}(\bm S)} \nabla_{\bth} \mu_{\bth} (\bm S) d \bm S,
\end{align}
where $\rho(\bm S)$ denotes the discounted state distribution \cite{silver2014deterministic}. This completes the proof.

\bibliographystyle{IEEEtran}
\bibliography{references}
\end{document}